\begin{document}

\mainmatter  

\title{On Infinite Words Determined by Indexed Languages\thanks{This is the full version of a paper accepted for publication at MFCS 2014.  It contains an appendix with proofs which were omitted or only sketched in the body.}}





%
%
\author{Tim Smith}
%

\institute{Northeastern University\\
Boston, MA, USA\\
\mailsa}

%
%

\maketitle

\begin{abstract}

We characterize the infinite words determined by indexed languages.  An infinite language $L$ determines an infinite word $\alpha$ if every string in $L$ is a prefix of $\alpha$.  If $L$ is regular or context-free, it is known that $\alpha$ must be ultimately periodic.  We show that if $L$ is an indexed language, then $\alpha$ is a morphic word, i.e., $\alpha$ can be generated by iterating a morphism under a coding.  Since the other direction, that every morphic word is determined by some indexed language, also holds, this implies that the infinite words determined by indexed languages are exactly the morphic words.  To obtain this result, we prove a new pumping lemma for the indexed languages, which may be of independent interest.

\end{abstract}

\section{Introduction}

Formal languages and infinite words can be related to each other in various ways.  One natural connection is via the notion of a prefix language.  A prefix language is a language $L$ such that for all $x,y \in L$, $x$ is a prefix of $y$ or $y$ is a prefix of $x$.  Every infinite prefix language determines an infinite word.  Prefix languages were introduced by Book \cite{book} in an attempt to study the complexity of infinite words in terms of acceptance and generation by automata.  Book used the pumping lemma for context-free languages to show that every context-free prefix language is regular, implying that any infinite word determined by such a language is ultimately periodic.  Recent work has continued and expanded Book's project, classifying the infinite words determined by various classes of automata \cite{smith3} and parallel rewriting systems \cite{smith2}.

In this paper we characterize the infinite words determined by indexed languages.  The indexed languages, introduced in 1968 by Alfred Aho \cite{aho}, fall between the context-free and context-sensitive languages in the Chomsky hierarchy.  More powerful than the former class and more tractable than the latter, the indexed languages have been applied to the study of natural languages \cite{gazdar} in computational linguistics.  Indexed languages are generated by indexed grammars, in which nonterminals are augmented with stacks which can be pushed, popped, and copied to other nonterminals as the derivation proceeds.  Two automaton characterizations are the nested stack automata of \cite{aho2}, and the order-2 pushdown automata within the Maslov pushdown hierarchy \cite{maslov}.

The class of indexed languages IL includes all of the stack automata classes whose infinite words are characterized in \cite{smith3}, as well as all of the rewriting system classes whose infinite words are characterized in \cite{smith2}.  In particular, IL properly includes ET0L \cite{ers}, a broad class within the hierarchy of parallel rewriting systems known as L systems.  L systems have close connections with a class of infinite words called morphic words, which are generated by repeated application of a morphism to an initial symbol, under a coding \cite{as}.  In \cite{smith2} it is shown that every infinite word determined by an ET0L language is morphic.  This raises the question of whether the indexed languages too determine only morphic words, or whether indexed languages can determine infinite words which are not morphic.

To answer this question, we employ a new pumping lemma for IL.  In Book's paper, as well as in \cite{smith2} and \cite{smith3}, pumping lemmas played a prominent role in characterizing the infinite words determined by various language classes.  A pumping lemma for a language class $C$ is a powerful tool for proving that certain languages do not belong to $C$, and thereby for proving that certain infinite words cannot be determined by any language in $C$.  For the indexed languages, a pumping lemma exists due to Hayashi \cite{hayashi}, as well as a ``shrinking lemma'' due to Gilman \cite{gilman}.  We were not successful in using these lemmas to characterize the infinite words determined by IL, so instead have proved a new pumping lemma for this class (Theorem \ref{pumpable}), which may be of independent interest.

Our lemma generalizes a pumping lemma recently proved for ET0L languages \cite{rabkin}.  Roughly, it states that for any indexed language $L$, any sufficiently long word $w \in L$ may be written as $u_1 \dotsm u_n$, each $u_i$ may be written as $v_{i,1} \dotsm v_{i,n_i}$, and the $v_{i,j}$s may be replaced with $u_i$s to obtain new words in $L$.  Using this lemma, we extend to IL a theorem about frequent and rare symbols proved in \cite{rabkin} for ET0L, which can be used to prove that certain languages are not indexed.  We also use the lemma to obtain the new result that every infinite indexed language has an infinite subset in a smaller class of L systems called CD0L.  This implies that every infinite word determined in IL can also be determined in CD0L, and thus that every such word is morphic.  Since every morphic word can be determined by some CD0L language \cite{smith2}, we therefore obtain a complete characterization of the infinite words determined by indexed languages: they are exactly the morphic words.



\subsection{Proof techniques}

Our pumping lemma for IL generalizes the one proved in \cite{rabkin} for ET0L.  Derivations in an ET0L system, like those in an indexed grammar, can be viewed as having a tree structure, but with certain differences.  In ET0L, symbols are rewritten in parallel, and the tree is organized into levels corresponding to the steps of the derivation.  Further, each node in the tree has one of a finite set of possible labels, corresponding to the symbols in the ET0L system.  The proof in \cite{rabkin} classifies each level of the tree according to the set of symbols which appear at that level, and then finds two levels with the same symbol set, which are used to construct the pumping operation.  By contrast, the derivation tree of an indexed grammar is not organized into levels in this way, and there is no bound on the number of possible labels for the nodes, since each nonterminal can have an arbitrarily large stack.  We deal with these differences by assigning each node a ``type'' based on the set of nonterminals which appear among its descendants immediately before its stack is popped.  These types then play a role analogous to the symbol sets of \cite{rabkin} in our construction of the pumping operation.


%


\subsection{Related work}

The model used in this paper, in which infinite words are determined by languages of their prefixes, originates in Book's 1977 paper \cite{book}.  Book formulated the ``prefix property'' in order to allow languages to ``approximate'' infinite sequences, and showed that for certain classes of languages, if a language in the class has the prefix property, then it is regular.  A follow-up by Latteux \cite{latteux} gives a necessary and sufficient condition for a prefix language to be regular.  Languages whose complement is a prefix language, called ``coprefix languages'', have also been studied; see Berstel \cite{berstel} for a survey of results on infinite words whose coprefix language is context-free.  In Smith \cite{smith2}, prefix languages are used to categorize the infinite words determined by a hierarchy of L system classes.  In Smith \cite{smith3}, they are used to characterize the infinite words determined by several classes of one-way stack automata, and also studied in connection with multihead deterministic finite automata.

Hayashi's 1973 pumping lemma for indexed languages is proved in a dense thirty-page paper \cite{hayashi}. The main theorem states that if a given terminal derivation tree is big enough, new terminal derivation trees can be generated by the insertion of other trees into the given one.  Hayashi applies his theorem to give a new proof that the finiteness problem for indexed languages is solvable and to show that certain languages are not indexed.  Gilman's 1996 ``shrinking lemma'' for indexed languages \cite{gilman} is intended to be easier to employ, and operates directly on terminal strings rather than on derivation trees.  Our lemma generalizes the recent ET0L pumping lemma of Rabkin \cite{rabkin}.  Like Gilman's lemma, it is stated in terms of strings rather than derivation trees, making it easier to employ, while like Hayashi's lemma and unlike Gilman's, it provides a pumping operation which yields an infinity of new strings in the language.

Another connection between indexed languages and morphic words comes from Braud and Carayol\cite{bc}, in which morphic words are related to a class of graphs at level 2 of the pushdown hierarchy.  The string languages at this level of the hierarchy are the indexed languages.


\subsection{Outline of paper}

The paper is organized as follows.  Section \ref{sec:preliminaries} gives preliminary definitions and propositions.  Section \ref{sec:pumping} gives our pumping lemma for indexed languages.  Section \ref{sec:applications} gives applications for the lemma, in particular characterizing the infinite words determined by indexed languages.  Section \ref{sec:conclusion} gives our conclusions.

\section{Preliminaries}\label{sec:preliminaries}

An \textbf{alphabet} $A$ is a finite set of symbols.  A \textbf{word} is a concatenation of symbols from $A$.  We denote the set of finite words by $A^*$ and the set of infinite words by $A^\omega$.  A \textbf{string} $x$ is an element of $A^*$.  The length of $x$ is denoted by $|x|$.  We denote the empty string by \textbf{$\lambda$}.  For a symbol $c$, $\#_c(x)$ denotes the number of appearances of $c$ in $x$, and for an alphabet $B$, $\#_B(x)$ denotes $\sum_{c \in B} \#_c(x)$.  A \textbf{language} is a subset of $A^*$.  A (symbolic) \textbf{sequence} $S$ is an element of $A^* \cup A^\omega$.  A \textbf{prefix} of $S$ is a string $x$ such that $S = x S'$ for some sequence $S'$.  A \textbf{subword} (or factor) of $S$ is a string $x$ such that $S = w x S'$ for some string $w$ and sequence $S'$.  For $i \geq 1$, $S[i]$ denotes the $i$th symbol of $S$.  For a string $x \neq \lambda$, $x^\omega$ denotes the infinite word $xxx\dotsm$.  An infinite word of the form $xy^\omega$, where $x$ and $y$ are strings and $y \neq \lambda$, is called \textbf{ultimately periodic}.

\subsection{Prefix languages}

A \textbf{prefix language} is a language $L$ such that for all $x,y \in L$, $x$ is a prefix of $y$ or $y$ is a prefix of $x$.  A language $L$ \textbf{determines} an infinite word $\alpha$ iff $L$ is infinite and every $x \in L$ is a prefix of $\alpha$.  For example, the infinite prefix language \{$\lambda$, \texttt{ab}, \texttt{abab}, \texttt{ababab}, $\dotsc$\} determines the infinite word $(\texttt{ab})^\omega$.  For a language class $C$, let $\omega(C)$ = $\{\alpha \mid$ $\alpha$ is an infinite word determined by some $L \in C\}$.  The following propositions are basic consequences of the definitions.

\begin{remark}A language determines at most one infinite word.\end{remark}

\begin{remark}A language $L$ determines an infinite word iff $L$ is an infinite prefix language.\end{remark}

\begin{remark}If a language $L$ determines an infinite word $\alpha$ and $L'$ is an infinite subset of $L$, then $L'$ determines $\alpha$.\end{remark}

\subsection{Morphic words}

A \textbf{morphism} on an alphabet $A$ is a map $h$ from $A^*$ to $A^*$ such that for all $x,y \in A^*$, $h(xy) = h(x)h(y)$.  Notice that $h(\lambda) = \lambda$.  The morphism $h$ is a \textbf{coding} if for all $a \in A$, $|h(a)| = 1$.  A string $x \in A^*$ is \textbf{mortal} (for $h$) if there is an $m \geq 0$ such that $h^m(x) = \lambda$.  The morphism $h$ is \textbf{prolongable} on a symbol $a$ if $h(a) = ax$ for some $x \in A^*$, and $x$ is not mortal.  If $h$ is prolongable on $a, h^\omega(a)$ denotes the infinite word $a\ x\ h(x)\ h^2(x)\ \dotsm$.  An infinite word $\alpha$ is \textbf{morphic} if there is a morphism $h$, coding $e$, and symbol $a$ such that $h$ is prolongable on $a$ and $\alpha = e(h^\omega(a))$.  For example, let:
\[
 \begin{matrix*}[l]
  h(\texttt{s}) = \texttt{sbaa} & \ e(\texttt{s}) = \texttt{a} \\
  h(\texttt{a}) = \texttt{aa} & \ e(\texttt{a}) = \texttt{a} \\
  h(\texttt{b}) = \texttt{b} & \ e(\texttt{b}) = \texttt{b}
 \end{matrix*}
\]
Then $e(h^\omega(\texttt{s})) = \texttt{a}^1\texttt{ba}^2\texttt{ba}^4\texttt{ba}^8\texttt{ba}^{16}\texttt{b}\dotsm$ is a morphic word.  See \cite{as} for more on morphic words.  Morphic words have close connections with the parallel rewriting systems known as L systems.  Many classes of L systems appear in the literature; here we define only HD0L and CD0L.  For more on L systems, including the class ET0L, see \cite{krs} and \cite{rs}.  An \textbf{HD0L system} is a tuple $G = (A,h,w,g)$ where $A$ is an alphabet, $h$ and $g$ are morphisms on $A$, and $w$ is in $A^*$.  The language of $G$ is $L(G) = \{g(h^i(w)) \mid i \geq 0\}$.  If $g$ is a coding, $G$ is a \textbf{CD0L system}.  \textbf{HD0L} and \textbf{CD0L} are the sets of HD0L and CD0L languages, respectively.  From \cite{krs} and \cite{ers} we have CD0L $\subset$ HD0L $\subset$ ET0L $\subset$ IL.  In \cite{smith2} it is shown that $\omega$(CD0L) = $\omega$(HD0L) = $\omega$(ET0L), and $\alpha$ is in this class of infinite words iff $\alpha$ is morphic.

\subsection{Indexed languages}

The class of indexed languages IL consists of the languages generated by indexed grammars.  These grammars extend context-free grammars by giving each nonterminal its own stack of symbols, which can be pushed, popped, and copied to other nonterminals as the derivation proceeds.  Indexed grammars come in several forms \cite{aho,gazdar,hu}, all generating the same class of languages, but varying with respect to notation and which productions are allowed.  The following definition follows the form of \cite{hu}. 

An \textbf{indexed grammar} is a tuple $G = (N,T,F,P,S)$ in which $N$ is the nonterminal alphabet, $T$ is the terminal alphabet, $F$ is the stack alphabet, $S \in N$ is the start symbol, and $P$ is the set of productions of the forms




\begin{center}
  $A \rightarrow r$ \hspace{1cm} $A \rightarrow Bf$ \hspace{1cm} $Af \rightarrow r$
\end{center}

with $A,B \in N$, $f \in F$, and $r \in (N \cup T)^*$.  In an expression of the form $A f_1 \dotsm f_n$ with $A \in N$ and $f_1, \dotsc, f_n \in F$, the string $f_1 \dotsm f_n$ can be viewed as a stack joined to the nonterminal $A$, with $f_1$ denoting the top of the stack and $f_n$ the bottom.  For $r \in (N \cup T)^*$ and $x \in F^*$, we write $r\{x\}$ to denote $r$ with every $A \in N$ replaced by $Ax$.  For example, with $A,B \in N$ and $\texttt{c},\texttt{d} \in T$, $\texttt{cd}AB\{f\} = \texttt{cd}AfBf$.  For $q,r \in (N F^* \cup T)^*$, we write $q \longrightarrow r$ if there are $q_1,q_2 \in (N F^* \cup T)^*$, $A \in N$, $p \in (N \cup T)^*$, and $x,y \in F^*$ such that $q = q_1\ A x\ q_2$, $r = q_1\ p\{y\}\ q_2$, and one of the following is true: (1) $A \rightarrow p$ is in $P$ and $y = x$, (2) $A \rightarrow pf$ is in $P$ and $y = fx$, or (3) $Af \rightarrow p$ is in $P$ and $x = fy$.  Let $\xlongrightarrow{*}$ be the reflexive, transitive closure of $\longrightarrow$.  For $A \in N$ and $x \in F^*$, let $L(Ax) = \{s \in T^* \mid Ax \xlongrightarrow{*} s\}$.  The language of $G$, denoted $L(G)$, is $L(S)$.  The class IL of indexed languages is \{$L(G) \mid G$ is an indexed grammar\}.











See Example \ref{example} and Figure \ref{fig:tree} for a sample indexed grammar and derivation tree.  For convenience, we will work with a form of indexed grammar which we call ``grounded'', in which terminal strings are produced only at the bottom of the stack.  $G$ is \textbf{grounded} if there is a symbol $\$ \in F$ (called the bottom-of-stack symbol) such that every production has one of the forms

\begin{center}
  $S \rightarrow A\$$ \hspace{1cm} $A \rightarrow r$ \hspace{1cm} $A \rightarrow Bf$ \hspace{1cm} $Af \rightarrow r$ \hspace{1cm} $A\$ \rightarrow s$
\end{center}

with $A,B \in N \setminus S$, $f \in F \setminus \$$, $r \in (N \setminus S)^+$, and $s \in T^*$.  It is not difficult to verify the following proposition.

\begin{restatable}{proposition}{grounded}
  \label{grounded}
For every indexed grammar $G$, there is a grounded indexed grammar $G'$ such that $L(G') = L(G)$.
\end{restatable}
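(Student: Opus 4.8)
\emph{Proof proposal.} The plan is to give an explicit construction of $G'$ from $G = (N,T,F,P,S)$ and then verify $L(G') = L(G)$ by simulating derivations in both directions. Introduce a fresh start symbol $S'$, a fresh bottom-of-stack symbol $\$$, a fresh nonterminal $X_a$ for each $a \in T$, and one more fresh nonterminal $E$; put $F' = F \cup \{\$\}$ and $N' = N \cup \{S'\} \cup \{X_a : a \in T\} \cup \{E\}$. The productions of $G'$ are: $S' \to S\$$ (this initializes the stack and turns the old start symbol $S$ into an ordinary nonterminal); for each $a \in T$, the ``draining'' productions $X_a f \to X_a$ for every $f \in F$, together with $X_a\$ \to a$; analogously $E f \to E$ for every $f \in F$, together with $E\$ \to \lambda$; and, for each production of $P$, a modified copy in which every terminal $a$ occurring on the right-hand side is replaced by $X_a$ and an empty right-hand side is replaced by $E$ (productions of the form $A \to Bf$ being copied verbatim). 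One checks directly that every production of $G'$ has one of the five grounded forms: $S'$ never occurs on a right-hand side; $\$$ is never pushed and never appears in an all-nonterminal right-hand side; and the modified right-hand sides are nonempty strings over $N' \setminus \{S'\}$ (nonempty because the empty right-hand side was replaced by $E$).

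For $L(G) \subseteq L(G')$ I would show, by induction on the length of a derivation in $G$, that $S \xlongrightarrow{*} q$ in $G$ implies $S' \longrightarrow S\$ \xlongrightarrow{*} \bar q$ in $G'$, where $\bar q$ is obtained from $q$ by appending $\$$ to the stack of every nonterminal and leaving terminals unchanged. A single $G$-step $q \to q'$ is mirrored by applying the corresponding modified production; this introduces occurrences of $X_a$ (resp. $E$) carrying a stack of the form $x\$$ with $x \in F^*$, and each such occurrence is then fully drained, $X_a(x\$) \xlongrightarrow{*} X_a\$ \longrightarrow a$ and $E(x\$) \xlongrightarrow{*} E\$ \longrightarrow \lambda$, recovering exactly $\bar{q'}$. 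Taking $q$ to be a terminal string $w$ gives $\bar w = w$, so $w \in L(G')$. For the reverse inclusion I would project: every $G'$-sentential form reachable from $S\$$ has all nonterminal stacks of the form $y\$$ with $y \in F^*$ (a one-step invariant), so the map $\pi$ sending $A(y\$) \mapsto Ay$ for $A \in N$, sending $X_a(y\$)$ and every standalone $a$ to the terminal $a$, and sending $E(y\$)$ to $\lambda$, is well-defined. One then checks that each $G'$-step either leaves $\pi$ unchanged (the draining productions and $X_a\$ \to a$, $E\$ \to \lambda$) or corresponds to exactly one $G$-step (the modified copies of the $P$-productions, and $A \to Bf$). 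Since the first step of any derivation of a terminal string in $G'$ must be $S' \to S\$$ and $S'$ does not recur, applying $\pi$ along the whole derivation yields $S \xlongrightarrow{*} w$ in $G$, so $w \in L(G)$.

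The construction and bookkeeping are routine; the one point needing care is the backward direction's invariant that intermediate $G'$-sentential forms stay ``well-structured'' — in particular that the auxiliary nonterminals $X_a$ and $E$ always carry a stack ending in exactly one $\$$, so that the draining phase is forced to terminate precisely when the terminal (or $\lambda$) is emitted, and that they never arise except from the modified productions. Lining up the definition of $\pi$ and the case analysis of $G'$-steps with this invariant is the main, if modest, obstacle; everything else is a direct check against the five grounded production forms.
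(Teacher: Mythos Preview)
Your construction is essentially identical to the paper's: a fresh start symbol with the rule $S' \to S\$$, a fresh nonterminal $X_a$ for each terminal (the paper calls your $E$ by the name $X_\lambda$), draining rules $X_a f \to X_a$ and $X_a\$ \to a$, and replacement of terminals/empty right-hand sides in the old productions. The paper simply asserts $L(G') = L(G)$ without argument, so your two-direction simulation with the projection $\pi$ is more detailed than what the paper provides; the approach is the same.
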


\section{Pumping Lemma for Indexed Languages}\label{sec:pumping}

In this section we present our pumping lemma for indexed languages (Theorem~\ref{pumpable}) and give an example of its use.  Our pumping lemma generalizes the ET0L pumping lemma of \cite{rabkin}.  Like that lemma, it allows positions in a word to be designated as ``marked'', and then provides guarantees about the marked positions during the pumping operation.





%
%

Let $G = (N,T,F,P,S)$ be a grounded indexed grammar.  To prove Theorem \ref{pumpable}, we will first prove a lemma about paths in derivation trees of $G$.  A derivation tree $D$ of a string $s$ has the following structure.  Each internal node of $D$ has a label in $N F^*$ (a nonterminal with a stack), and each leaf has a label in $T^*$ (a terminal string).  Each internal node has either a single leaf node as a child, or one or more internal children.  The root of $D$ is labelled by the start symbol $S$, and the terminal yield of $D$ is the string $s$.

If the string $s$ contains marked positions, then we will take $D$ to be marked in the following way.  Mark every leaf whose label contains a marked position of $s$, and then mark every internal node which has a marked descendant.  Call any node with more than one marked child a \textbf{branch node}.

A \textbf{path} $H$ in $D$ is a list of nodes $(v_0, \dotsc, v_m)$ with $m \geq 0$ such that for each $1 \leq i \leq m$, $v_i$ is a child of $v_{i-1}$.  For convenience, we will sometimes refer to nodes in $H$ by their indices; e.g. node $i$ in the context of $H$ means $v_i$.  When we say that there is a branch node between $i$ and $j$ we mean that the branch node is between $v_i$ (inclusive) and $v_j$ (exclusive).

We define several operations on internal nodes of $D$.  Each such node $v$ has the label $Ax$ for some $A \in N$ and $x \in F^*$.  Let $\sigma(v) = A$ and $\eta(v) = |x|$.  $\sigma(v)$ gives the nonterminal symbol of $v$ and $\eta(v)$ gives the height of $v$'s stack.  We say that a node $v'$ is in the scope of $v$ iff $v'$ is an internal node and there is a path in $D$ from $v$ to $v'$ such that for every node $v''$ on the path (including $v'$), $\eta(v'') \geq \eta(v)$.  Let $\beta(v)$ be the set of nodes $v'$ such that $v'$ is in the scope of $v$ but no child of $v'$ is in the scope of $v$.  The set $\beta(v)$ can be viewed as the ``last'' nodes in the scope of $v$.  Notice that for all $v' \in \beta(v)$, $\eta(v') = \eta(v)$.  Finally, we give $v$ a ``type'' $\tau(v)$ based on which nonterminal symbols appear in $\beta(v)$.  Let $\tau(v)$ be a 3-tuple such that:

\begin{itemize}
\item $\tau(v)[1] = \{A \in N \mid$ for all $v' \in \beta(v)$, $\sigma(v') \neq A\}$

\item $\tau(v)[2] = \{A \in N \mid$ for some $v' \in \beta(v)$, $\sigma(v') = A$, and for all marked $v' \in \beta(v)$, $\sigma(v') \neq A\}$

\item $\tau(v)[3] = \{A \in N \mid$ for some marked $v' \in \beta(v)$, $\sigma(v') = A\}$
\end{itemize}
Notice that for each $v$, $\tau(v)$ partitions $N$: every $A \in N$ occurs in exactly one of $\tau(v)[1]$, $\tau(v)[2]$, and $\tau(v)[3]$.  So there are $3^{|N|}$ possible values for $\tau(v)$.

\begin{restatable}{lemma}{limited}
  \label{limited}

Let $H = (v_0, \dotsc, v_m)$ be a path in a derivation tree $D$ from the root to a leaf (excluding the leaf) with more than $(|N|\cdot3^{|N|})^{|N|^2\cdot3^{|N|}+1}$ branch nodes.  Then there are $0 \leq b_1 < t_1 < t_2 \leq b_2 \leq m$ such that
\begin{itemize}

\item $\sigma(b_1) = \sigma(t_1)$ and $\sigma(t_2) = \sigma(b_2)$,
\item $b_2$ is in $\beta(b_1)$ and $t_2$ is in $\beta(t_1)$,
\item $\tau(b_1) = \tau(t_1)$, and
\item there is a branch node between $b_1$ and $t_1$ or between $t_2$ and $b_2$.


\end{itemize}\end{restatable}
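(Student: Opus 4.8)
The plan is to find the required indices by a two-level pigeonhole argument along the path $H$, using the quantity $\tau(v)$ as the "fingerprint" of a node and the branch-node count as the resource we spend. First I would isolate from $H$ a long subsequence of nodes that are "comparable" in the scope relation: recall that if $v'$ is in the scope of $v$ then $\eta(v') \ge \eta(v)$, and the set $\beta(v)$ collects the last such nodes, all of stack height exactly $\eta(v)$. Walking down $H$ from the root, I would greedily build a descending chain of nodes $w_0, w_1, w_2, \dots$ where $w_{k+1}$ is the unique node of $\beta(w_k)$ lying on $H$ (it is unique because $H$ is a path, and it exists because $H$ runs all the way to a leaf, where the stack must eventually be emptied below any given height). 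Between consecutive $w_k$ and $w_{k+1}$, either there is a branch node or there is not; since $H$ has more than $(|N|\cdot 3^{|N|})^{|N|^2\cdot 3^{|N|}+1}$ branch nodes, and the chain can "skip over" at most $|N|$ nodes of equal stack height at each value before the height must change, a counting argument shows the chain is long — long enough that many of its links contain a branch node.

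Next I would apply pigeonhole to the pairs $\bigl(\sigma(w_k),\tau(w_k)\bigr)$ along this chain. There are $|N|\cdot 3^{|N|}$ possible values for such a pair. The key structural observation is the nesting/transitivity of $\beta$: if $w_\ell \in \beta(w_k)$ with $k < \ell$ (which holds by construction of the chain, iterated), and if moreover $\sigma(w_k) = \sigma(w_\ell)$ and $\tau(w_k) = \tau(w_\ell)$, then $w_k$ and $w_\ell$ are a matched "pump pair." So I want to find $b_1 < t_1 < t_2 < b_2$ in the chain with $(\sigma,\tau)(b_1) = (\sigma,\tau)(t_1)$ and, simultaneously, $\sigma(t_2) = \sigma(b_2)$ with $b_2 \in \beta(b_1)$, $t_2 \in \beta(t_1)$. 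The way to get both nestings at once is to look not at single chain nodes but at windows of the chain: assign to each index $k$ the tuple recording $(\sigma,\tau)$ of the next, say, $|N|\cdot 3^{|N|}+1$ chain nodes after $w_k$; there are at most $(|N|\cdot 3^{|N|})^{|N|\cdot 3^{|N|}+1}$ such window-types, but we have far more than $(|N|\cdot 3^{|N|})^{|N|^2\cdot 3^{|N|}+1}$ branch-links to work with, so two windows repeat. Within a single window of length $|N|\cdot 3^{|N|}+1$ we can always find a repeated $\sigma$-value (only $|N|$ choices), giving the inner pair $(t_1 \!\to\! t_2)$ or $(t_2\!\to\!b_2)$; matching the two repeated windows gives the outer pair, and the span between them — or inside one of them — can be arranged to contain a branch node since we selected the repetition among branch-carrying links.

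The main obstacle I anticipate is bookkeeping the interplay between the four indices and the branch-node requirement: we need $b_1 \le t_1 < t_2 \le b_2$ with the branch node located specifically between $b_1$ and $t_1$ or between $t_2$ and $b_2$ (not merely somewhere in $[b_1,b_2]$), while simultaneously honoring the two $\beta$-nestings and the two $\sigma$-equalities and the one $\tau$-equality. Getting the iterated exponent in the hypothesis, $(|N|\cdot 3^{|N|})^{|N|^2\cdot 3^{|N|}+1}$, to come out exactly right will require being careful that each of the two pigeonhole passes (first collapsing chain length to branch-link count up to the factor $|N|$ for equal-height runs, then the window repetition of length governed by $|N|\cdot 3^{|N|}$, then the $\sigma$-collapse of size $|N|$ inside a window) multiplies the relevant bounds in the right order. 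I would organize the proof so that the coarse chain-extraction and the branch-node accounting are done first as a standalone claim, and then the double pigeonhole on $(\sigma,\tau)$-windows is a clean final step; the transitivity of "$\in\beta(\cdot)$" along the chain, which I would verify as a short auxiliary lemma from the definition of scope, is what makes all four index constraints consistent.
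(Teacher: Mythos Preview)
Your chain construction collapses immediately. For any $v'\in\beta(v)$ one has $\eta(v')=\eta(v)$, and by definition no child of $v'$ lies in the scope of $v$; since $v'$ itself is in the scope of $v$, this forces every child of $v'$ to be a leaf or to have stack height strictly below $\eta(v)$. So once you pass from $w_0$ to the node $w_1\in\beta(w_0)$ on $H$, the very next step along $H$ drops below $\eta(w_1)=\eta(w_0)$, and the only node of $H$ in the scope of $w_1$ is $w_1$ itself: $\beta(w_1)\cap H=\{w_1\}$. Your ``descending chain'' therefore has at most two distinct members regardless of where you start it, and there is nothing to pigeonhole on. The remark that the chain ``can skip over at most $|N|$ nodes of equal stack height before the height must change'' reflects a misreading of $\beta$: the step $w_k\mapsto w_{k+1}\in\beta(w_k)$ never changes the stack height, and after one application it is stuck. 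The ``transitivity of $\in\beta(\cdot)$'' you plan to verify as an auxiliary lemma is likewise vacuous here.

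The deeper obstacle is that the lemma demands two \emph{nested} $\beta$-relations, $b_2\in\beta(b_1)$ and $t_2\in\beta(t_1)$ with $b_1<t_1<t_2\le b_2$ and a branch node located in one of the gaps $[b_1,t_1)$ or $[t_2,b_2)$; a single linear chain along $H$ cannot encode this two-level structure, and matching windows would at best produce repeated $(\sigma,\tau)$-profiles, not a coordinated pair of $\beta$-intervals. The paper handles this by contraposition and an induction tailored to the nesting. It calls a descent \emph{limited} when no such quadruple exists, shows that a flat limited descent has at most $|N|\cdot 3^{|N|}$ splits, and then proves by induction on $|W|$ that a limited controlled descent respecting a set $W\subseteq N\times N\times R$ of triples $(\sigma(i),\sigma(j),\tau(i))$ (ranging over pairs with $j\in\beta(i)$) has at most $(|N|\cdot 3^{|N|})^{|W|+1}$ splits. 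The inductive step decomposes the path into its minimum-height \emph{base} and the \emph{hills} above it; once some split lies outside a given hill, the triple $(\sigma(0),\sigma(m),\tau(0))$ becomes forbidden inside that hill---a repetition there would contradict limitedness---so one recurses into the hill with $|W|$ decreased by one. Taking $W=N\times N\times R$ gives the exponent $|N|^2\cdot 3^{|N|}+1$ in the statement.
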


\begin{proof}[Sketch]If $H$ is flat, i.e. if all of the nodes in $H$ have the same stack, then after $|N|\cdot3^{|N|}$ branch nodes, there will have been two nodes with the same $\sigma$ and $\tau$, with a branch node between them.  Then we can set $b_1$ and $t_1$ to these two nodes and set $t_2 = b_2 = m$, since $m$ will be in $\beta(v)$ for every node $v$ on the path, because $H$ is flat.  If $H$ is not flat, then consider just the ``base'' of $H$, i.e. the nodes in $H$ with the smallest stack.  These nodes are separated by ``hills'' in which the stack is bigger.  The base of $H$ can be viewed as a flat path with gaps corresponding to the hills.  Then at most $|N|\cdot3^{|N|}$ of the hills can contain branch nodes.  We can then use an inductive argument to bound the number of branch nodes in each hill.  In this argument, each hill is itself treated as a path, which is shorter than the original path $H$ and so subject to the induction.  Since node 0 and node $m$ in $H$ can serve as a potential $b_1$ and $b_2$ for any of the hills, each hill has fewer configurations of $\sigma$ and $\tau$ to ``choose from'' if it is to avoid containing nodes which could serve as $t_1$ and $t_2$.  Working out the details of the induction gives the bound stated in the lemma.\qed\end{proof}

We are now ready to state our pumping lemma for indexed languages, which generalizes the pumping lemma for ET0L languages of \cite{rabkin}.  As noted, this lemma allows arbitrary positions in a word to be designated as ``marked'', and then provides guarantees about the marked positions during the pumping operation.  The only difference between our pumping operation and that of Theorem 15 of \cite{rabkin} is that in the latter, there are guaranteed to be at least two marked positions in the $v_{\textbf{i},\textbf{j}}$ of part 4, whereas in our lemma, this $v_{\textbf{i},\textbf{j}}$ might not contain any marked positions and could even be an empty string (which nonetheless maps under $\phi$ to $u_\textbf{i}$, which does contain a marked position).

\begin{restatable}{theorem}{pumpable}
  \label{pumpable}

Let $L$ be an indexed language.  Then there is an $l \geq 0$ (which we will call a threshold for $L$) such that for any $w \in L$ with at least $l$ marked positions,

\begin{enumerate}
\item $w$ can be written as $w = u_1 u_2 \dotsm u_n$ and each $u_i$ can be written $u_i = v_{i,1} v_{i,2} \dotsm v_{i,n_i}$ (we will denote the set of subscripts of $v$, i.e. $\{(i,j) \mid 1 \leq i \leq n$ and $1 \leq j \leq n_i\}$, by $I$);
\item there is a map $\phi : I \rightarrow \{1,\dotsc,n\}$ such that if each $v_{i,j}$ is replaced with $u_{\phi(i,j)}$, then the resulting word is still in $L$, and this process can be applied iteratively to always yield a word in $L$;
\item if $v_{i,j}$ contains a marked position then so does $u_{\phi(i,j)}$;
\item there is an $(\textup{\textbf{i}},\textup{\textbf{j}}) \in I$ such that $\phi(\textup{\textbf{i}},\textup{\textbf{j}}) = \textup{\textbf{i}}$, and there is at least one marked position in $u_\textup{\textbf{i}}$ but outside of $v_{\textup{\textbf{i}},\textup{\textbf{j}}}$.
\end{enumerate}\end{restatable}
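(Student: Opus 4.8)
The plan is to adapt the ET0L pumping argument of \cite{rabkin}, letting the node type $\tau$ of Lemma~\ref{limited} play the role that the symbol set of a derivation level plays there, and the scope operator $\beta$ the role of the level itself. First I would use Proposition~\ref{grounded} to pass to a grounded grammar $G=(N,T,F,P,S)$ for $L$. Let $k_0$ be the maximum length of a right-hand side, $s_{\max}$ the maximum length of a terminal right-hand side, and $B=(|N|\cdot 3^{|N|})^{|N|^2\cdot 3^{|N|}+1}$ the bound of Lemma~\ref{limited}. Taking $l:=s_{\max}\cdot(k_0^{B}+1)$, a word $w\in L$ with at least $l$ marked positions has a marked derivation tree $D$ with more than $k_0^{B}$ marked leaves (each leaf label has length at most $s_{\max}$), so, since every node has at most $k_0$ children, some root-to-leaf path carries more than $B$ branch nodes. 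Applying Lemma~\ref{limited} to that path (minus its leaf) yields nodes $b_1<t_1<t_2\le b_2$ with $\sigma(b_1)=\sigma(t_1)=:A$, $\sigma(t_2)=\sigma(b_2)=:B$, $b_2\in\beta(b_1)$, $t_2\in\beta(t_1)$, $\tau(b_1)=\tau(t_1)$, and a branch node between $b_1$ and $t_1$ or between $t_2$ and $b_2$.

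The pivotal structural fact I would isolate is that a scope is \emph{stack-polymorphic}: a segment of $D$ from a node $v$ (with label $Ex$) to a node in its scope never inspects $x$, so it may be replayed verbatim over any other bottom stack in place of $x$. In particular the scope subtree of $b_1$ takes $b_1$ down to the frontier $\beta(b_1)$, whose nodes all carry the stack $x_1$ of $b_1$; the stack of $t_1$ extends $x_1$ on top by some $z$ (possibly $z=\lambda$), with $t_2$ the node of $\beta(t_1)$ above $b_2$; and the identity $\tau(b_1)=\tau(t_1)$ (parts $[1]$ and $[3]$) forces $\beta(b_1)$ and $\beta(t_1)$ to carry the same set of nonterminals and the same set of nonterminals on marked nodes. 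I would then obtain $w=u_1\cdots u_n$ by cutting $D$ along a frontier built from the scope-frontiers $\beta(b_1)$ and $\beta(t_1)$ (together with the leaves outside the subtree at $b_1$), and the finer $u_i=v_{i,1}\cdots v_{i,n_i}$ by cutting one step further onto $\beta(b_1)$; each $v_{i,j}$ is the yield of a subtree whose root carries some nonterminal $C$, and $\phi(i,j)$ is chosen to point to a $u$-piece whose subtree is rooted at a node carrying the same $C$---possible because $\beta(b_1)$ and $\beta(t_1)$ carry the same nonterminals---picking a marked such node whenever $v_{i,j}$ contains a marked position, which is exactly part~3; pieces with nothing to pump map to themselves. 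For part~4 I would use the last clause of Lemma~\ref{limited}: if the branch node lies between $b_1$ and $t_1$, the self-similar pair $(D_{b_1},D_{t_1})$ (both rooted at $A$) supplies an $(\mathbf{i},\mathbf{j})$ with $\phi(\mathbf{i},\mathbf{j})=\mathbf{i}$ and a marked position of $u_\mathbf{i}$ outside $v_{\mathbf{i},\mathbf{j}}$ coming from the branch node's off-path marked child; if it lies between $t_2$ and $b_2$, the pair $(D_{t_2},D_{b_2})$ (both rooted at $B$, with $D_{b_2}$ inside $D_{t_2}$) does, and the frontiers are arranged so that this pair is visible as such.

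The pumping operation itself is a tree surgery. From $D$ I would splice one fresh copy of the segment from $b_1$ to $t_1$ immediately above $t_1$; that copy sits one ``$z$'' too high, so its scope-frontier leaves, which in $D$ were $\beta(b_1)$-nodes at stack $x_1$, now match $\beta(t_1)$-nodes at stack $zx_1$, and I reattach copies of the corresponding $\beta(t_1)$-rooted subtrees there; the residual height-$|z|$ discrepancy at the bottom is removed by repeatedly reattaching copies of the scope segments running from $\beta(t_1)$ back down to $\beta(b_1)$, each peeling off one $z$ by stack-polymorphism, until everything is back at stack $x_1$ and the original $\beta(b_1)$-rooted subtrees reattach cleanly. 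The result is a legal derivation tree whose yield is precisely the word obtained from $w$ by the replacement of part~2, and iterating the splice realizes the iteration; when $z=\lambda$ this collapses to the ordinary context-free pump, with no stack bookkeeping.

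The main obstacle will be the bookkeeping in this surgery: pinning down the frontiers of part~2 so that the cut points and the map $\phi$ line up exactly with the reattachments, and verifying that every reattachment is type-correct and that the ``peel off a $z$'' recursion terminates and closes up precisely at $\beta(b_1)$. Minor further points are the technicality noted before the theorem---a $v_{i,j}$ may be empty (a nonterminal over $\$$ can rewrite to $\lambda$) yet map to a nonempty $u_{\phi(i,j)}$ containing a marked position---and the degenerate cases $b_1$ the root, $z=\lambda$, $t_1=t_2$, and $A=B$.
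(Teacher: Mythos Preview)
Your proposal is correct and follows the paper's proof essentially step for step: pass to a grounded grammar, set the threshold from the bound in Lemma~\ref{limited}, find a root-to-leaf path with enough branch nodes, extract $b_1,t_1,t_2,b_2$, take the $u_i$ from the frontier $\beta(t_1)$ and the $v_{i,j}$ from $\beta(b_1)$, define $\phi$ via $\tau(b_1)=\tau(t_1)$, and pump by the stack-polymorphism of the scope. Two small points where the paper's execution is cleaner than your sketch: the surgery is one-shot (copy $D(b_1)$, shift its ambient bottom stack from $x$ to $yx$, replace each $\beta(b_1)$-subtree of the copy by the $\phi$-matched $\beta(t_1)$-subtree of $D$, and substitute the whole result for $D(t_1)$), so no ``residual discrepancy'' or iterated peeling ever arises; and for the branch-between-$b_1$-and-$t_1$ case of part~4 the paper inserts an artificial empty factor $v_{2,1}=\lambda$ (respectively $v_{n-1,n_{n-1}}=\lambda$) with $\phi$ mapping it to its own block $u_2$ (respectively $u_{n-1}$), which is the concrete device realising what you call the self-similar pair $(D_{b_1},D_{t_1})$.
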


\begin{proof}[Sketch]We take a grounded indexed grammar $G$ with language $L$ and set the threshold $l$ using the bound from Lemma \ref{limited} together with some properties of the productions of $G$.  Then we take any $w \in L$ with at least $l$ marked positions and take a derivation tree $D$ for $w$.  Some path in $D$ from the root to a leaf then has enough branch nodes to give us the $b_1$, $t_1$, $t_2$, and $b_2$ from Lemma \ref{limited}.  We then need to construct the map $\phi$ and the factors $u_i$ and $v_{i,j}$.  To do this, we use the nodes in $\beta(b_1)$ and $\beta(t_1)$.  The nodes in $\beta(b_1)$ will correspond to $v_{i,j}$s and those in $\beta(t_1)$ will correspond to $u_i$s.  The operation $\phi$ will then map each node in $\beta(b_1)$ to a node in $\beta(t_1)$ with the same $\sigma$, and which is marked if the node being mapped is marked.  This is possible because $\tau(b_1) = \tau(t_1)$.  The justification for this construction is that in $D$, between $b_1$ and $t_1$ the stack grows from $x$ to $yx$ for some $x,y \in F^*$, and then shrinks back to $x$ between $\beta(t_1)$ and $\beta(b_1)$.  Since $\sigma(b_1) = \sigma(t_1)$, the steps between $b_1$ and $t_1$ can be repeated, growing the stack from $x$ to $yx$ to $yyx$ to $yyyx$, and so on.  Then the $y$s can be popped back off by repeating the steps between the nodes in $\beta(t_1)$ and $\beta(b_1)$.  This construction gives us parts 1, 2, and 3 of the theorem.  Part 4 follows from the fact that there is a branch node between $b_1$ and $t_1$ or between $t_2$ and $b_2$.  In the former case, the $u_\textup{\textbf{i}}$ in part 4 corresponds to the yield produced between $b_1$ and $t_1$ involving the branch node, and the $v_{\textup{\textbf{i}},\textup{\textbf{j}}}$ is specially constructed as an empty factor which maps to $u_\textup{\textbf{i}}$.  In the latter case, since $t_2$ is in $\beta(t_1)$, $b_2$ is in $\beta(b_1)$, and $\sigma(t_2) = \sigma(b_2)$, the $u_\textup{\textbf{i}}$ in part 4 corresponds to $t_2$ and the $v_{\textup{\textbf{i}},\textup{\textbf{j}}}$ corresponds to $b_2$.\qed\end{proof}

\vspace{-16pt}
\begin{figure}
\begin{center}
\includegraphics[width=0.7\textwidth]{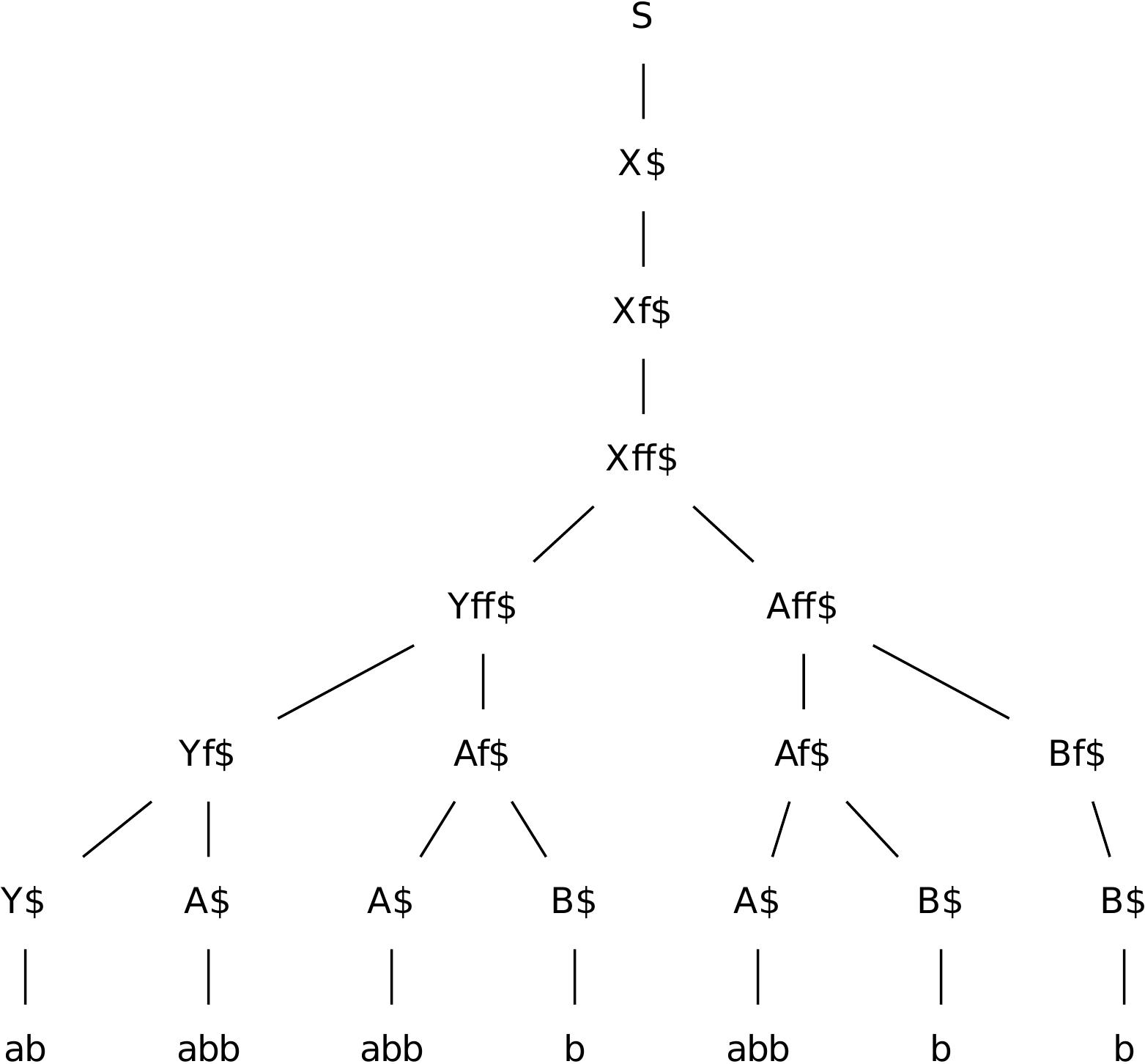}
\caption{A derivation tree for the string $ab^1ab^2ab^3ab^4$.}\label{fig:tree}
\end{center}
\end{figure}
\vspace{-26pt}

\begin{example}\label{example}
We now give an example of how our pumping operation works on a derivation tree of an indexed grammar.  Let the nonterminal alphabet $N$ be $\{S,X,Y,A,B\}$, the terminal alphabet $T$ be $\{a,b\}$, the stack alphabet $F$ be $\{f\}$, and the set of productions $P$ be $\{S \rightarrow X\$$, $X \rightarrow Xf$, $X \rightarrow YA$, $Yf \rightarrow YA$, $Y\$ \rightarrow ab$, $Af \rightarrow AB$, $A\$ \rightarrow abb$, $Bf \rightarrow B$, $B\$ \rightarrow b\}$.  Let $G$ be the indexed grammar $(N,T,F,P,S)$.  Notice that $G$ is grounded and that $L(G)$ determines the infinite word $ab^1ab^2ab^3\dotsm$.  Figure \ref{fig:tree} depicts a derivation tree of $G$ with terminal yield $ab^1ab^2ab^3ab^4$.  Notice how stacks are copied as the derivation proceeds; for example, the production $X \rightarrow YA$ applied to $Xff\$$ copies $X$'s stack $ff\$$ to both $Y$ and $A$, yielding $Yff\$\ Aff\$$.

Take every position in the terminal yield of the tree to be marked.  Let $b_1$ and $t_1$ be the nodes labelled $X\$$ and $Xf\$$, respectively, and let $t_2$ and $b_2$ be the nodes labelled $Af\$$ and $A\$$, respectively, in the $Yff\$$ subtree.  We have $\sigma(b_1) = \sigma(t_1) = X$, $\sigma(t_2) = \sigma(b_2) = A$, and $\tau(b_1) = \tau(t_1) = [\{S,X\}, \{\}, \{Y,A,B\}]$.  Additionally, $b_2$ is in $\beta(b_1)$, which consists of all the nodes labelled $A\$$, $B\$$, or $Y\$$, and $t_2$ is in $\beta(t_1)$, which consists of all the nodes labelled $Af\$$, $Bf\$$, or $Yf\$$.  Also, there is a branch node between $t_2$ and $b_2$, namely $t_2$ itself.  This satisfies the conditions of Lemma \ref{limited}.  We break up the string as $\overbracket[0.5pt][1pt]{\hspace{2pt}\overbracket[0.5pt][1pt]{ab} \overbracket[0.5pt][1pt]{abb}\hspace{2pt}} \overbracket[0.5pt][1pt]{\hspace{2pt}\overbracket[0.5pt][1pt]{abb} \overbracket[0.5pt][1pt]{b}\hspace{2pt}} \overbracket[0.5pt][1pt]{\hspace{2pt}\overbracket[0.5pt][1pt]{abb} \overbracket[0.5pt][1pt]{b}\hspace{2pt}} \overbracket[0.5pt][1pt]{\hspace{2pt}\overbracket[0.5pt][1pt]{b}\hspace{2pt}}$, where the outer brackets delimit the $u_i$s and the inner brackets delimit the $v_{i,j}$s.  Set $\phi(1,1) = 1$, $\phi(1,2) = 2$, $\phi(2,1) = 2$, $\phi(2,2) = 4$, $\phi(3,1) = 2$, $\phi(3,2) = 4$, and $\phi(4,1) = 4$.  Applying the pumping operation yields $\overbracket[0.5pt][1pt]{\hspace{2pt} \overbracket[0.5pt][1pt]{ab} \overbracket[0.5pt][1pt]{abb}\hspace{2pt}} \overbracket[0.5pt][1pt]{\hspace{2pt} \overbracket[0.5pt][1pt]{abb} \overbracket[0.5pt][1pt]{b}\hspace{2pt}} \overbracket[0.5pt][1pt]{\hspace{2pt} \overbracket[0.5pt][1pt]{abb} \overbracket[0.5pt][1pt]{b}\hspace{2pt}} \overbracket[0.5pt][1pt]{\hspace{2pt} \overbracket[0.5pt][1pt]{b}\hspace{2pt}} \overbracket[0.5pt][1pt]{\hspace{2pt} \overbracket[0.5pt][1pt]{abb} \overbracket[0.5pt][1pt]{b}\hspace{2pt}} \overbracket[0.5pt][1pt]{\hspace{2pt} \overbracket[0.5pt][1pt]{b}\hspace{2pt}} \overbracket[0.5pt][1pt]{\hspace{2pt} \overbracket[0.5pt][1pt]{b}\hspace{2pt}} = ab^1ab^2ab^3ab^4ab^5$, which is indeed in $L(G)$.  Applying it again yields $ab^1ab^2ab^3ab^4ab^5ab^6$, and so on.



\end{example}

We now follow \cite{rabkin} in giving a more formal description of the replacement operation in part 2 of Theorem \ref{pumpable}.  This operation produces the words $w^{(t)}$ for all $t \geq 0$, where
\vspace{-9pt}
\begin{align*}
v_{i,j}^{(0)} &= v_{i,j} \\
u_i^{(t)} &= v_{i,1}^{(t)} v_{i,2}^{(t)} \dotsm v_{i,n_i}^{(t)} \\
v_{i,j}^{(t+1)} &= u_{\phi(i,j)}^{(t)} \\
w^{(t)} &= u_1^{(t)} u_2^{(t)} \dotsm u_n^{(t)}
\end{align*}
Notice that $w^{(0)} = w$.  The following lemma states that the number of marked symbols tends to infinity as the replacement operation is repeatedly applied.

\begin{lemma}\label{infinite}If $L$ is an indexed language with threshold $l$, and $w \in L$ has at least $l$ marked symbols, then for all $t \geq 0$, $w^{(t)}$ has at least $t$ marked symbols.\end{lemma}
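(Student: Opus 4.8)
The plan is to prove the statement by induction on $t$. The base case $t = 0$ is immediate: $w^{(0)} = w$ has at least $l \geq 0$ marked symbols, and we need at least $0$ marked symbols, which holds trivially. For the inductive step, I would assume that $w^{(t)}$ has at least $t$ marked symbols and show that $w^{(t+1)}$ has at least $t+1$ marked symbols.

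The key observation driving the induction is part 4 of Theorem \ref{pumpable}: there is a distinguished pair $(\textbf{i},\textbf{j}) \in I$ with $\phi(\textbf{i},\textbf{j}) = \textbf{i}$, and $u_{\textbf{i}}$ contains at least one marked position lying outside of $v_{\textbf{i},\textbf{j}}$. I would first establish a monotonicity claim: for every $(i,j) \in I$ and every $t \geq 0$, the number of marked symbols in $v_{i,j}^{(t+1)}$ is at least the number in $v_{i,j}^{(t)}$, and consequently the number of marked symbols in $w^{(t+1)}$ is at least the number in $w^{(t)}$. This should follow by a simultaneous induction on $t$ using part 3 of the theorem: since $v_{i,j}^{(t+1)} = u_{\phi(i,j)}^{(t)}$ and part 3 guarantees that whenever $v_{i,j}$ (equivalently $v_{i,j}^{(0)}$) is marked then $u_{\phi(i,j)}$ is marked, one propagates this upward — at each stage a marked $v_{i,j}^{(t)}$ forces $v_{i,j}^{(t+1)} = u_{\phi(i,j)}^{(t)}$ to contain at least as many marked symbols as $v_{i,j}^{(t)}$, since $u_{\phi(i,j)}^{(t)} \supseteq$ (as a factor) the piece $v_{\phi(i,j),j'}^{(t)}$ wherever the marks came from; one has to track this carefully but it is essentially bookkeeping with the recursion $v_{i,j}^{(t+1)} = u_{\phi(i,j)}^{(t)}$ and $u_i^{(t)} = v_{i,1}^{(t)} \dotsm v_{i,n_i}^{(t)}$.

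Granting monotonicity, the inductive step is then: $w^{(t+1)}$ has at least as many marked symbols as $w^{(t)}$, which by the outer induction hypothesis has at least $t$. It remains to find one additional marked symbol in $w^{(t+1)}$. Here I would use part 4 together with the fact that $\phi(\textbf{i},\textbf{j}) = \textbf{i}$. Concretely, $v_{\textbf{i},\textbf{j}}^{(t+1)} = u_{\textbf{i}}^{(t)}$. The factor $u_{\textbf{i}}^{(t)} = v_{\textbf{i},1}^{(t)} \dotsm v_{\textbf{i},n_{\textbf{i}}}^{(t)}$ contains $v_{\textbf{i},\textbf{j}}^{(t)}$ as one of its pieces and also contains, in another piece $v_{\textbf{i},j'}^{(t)}$ with $j' \neq \textbf{j}$, a marked symbol coming from the marked position of $u_{\textbf{i}} = u_{\textbf{i}}^{(0)}$ that lies outside $v_{\textbf{i},\textbf{j}} = v_{\textbf{i},\textbf{j}}^{(0)}$ (using monotonicity to guarantee that mark persists into stage $t$). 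Thus $v_{\textbf{i},\textbf{j}}^{(t+1)} = u_{\textbf{i}}^{(t)}$ strictly contains the marked content of $v_{\textbf{i},\textbf{j}}^{(t)}$ plus at least one extra marked symbol. Since $v_{\textbf{i},\textbf{j}}^{(t+1)}$ is itself a factor of $u_{\textbf{i}}^{(t+1)}$ and hence of $w^{(t+1)}$, while all other factors of $w^{(t+1)}$ have at least as many marked symbols as the corresponding factors of $w^{(t)}$, we conclude $w^{(t+1)}$ has at least $t + 1$ marked symbols.

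The main obstacle I anticipate is making the monotonicity claim and the "extra mark" argument precise simultaneously, because the recursion defining $w^{(t)}$ nests the factors $v_{i,j}$ inside one another in a way that depends on $\phi$, and one must be careful that the marked position guaranteed by part 4 at time $0$ actually survives — i.e. has a marked "descendant" — at every later time $t$. This requires a clean lemma of the form: if $v_{i,j}^{(s)}$ contains a marked symbol for some $s$, then $v_{i,j}^{(s')}$ contains a marked symbol for all $s' \geq s$, and more quantitatively the mark-count is nondecreasing in $t$ componentwise. Once that lemma is in hand, the counting argument above goes through with only routine manipulation of the recursion.
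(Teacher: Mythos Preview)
Your proposed monotonicity claim --- that the number of marked symbols in each $v_{i,j}^{(t)}$ (and hence in $w^{(t)}$) is nondecreasing in $t$ --- is false in general, and this is a genuine gap in your plan. Part~3 of Theorem~\ref{pumpable} only guarantees that if $v_{i,j}$ contains \emph{some} marked position then $u_{\phi(i,j)}$ contains \emph{some} marked position; it does not guarantee that $u_{\phi(i,j)}$ has at least as many marked positions as $v_{i,j}$. So the base case of your monotonicity induction already fails: one can have a $v_{i,j}$ with, say, five marked symbols mapping via $\phi$ to a $u_{\phi(i,j)}$ with only two, fully consistent with parts 1--4. Consequently the total marked-symbol count in $w^{(t)}$ can drop from one step to the next, and your counting argument does not go through as written.

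The paper's proof avoids this by changing what is counted: instead of marked symbols, it counts \emph{occurrences of marked $v$-words} in $w^{(t)}$ (where $w^{(t)}$ is always a concatenation of original $v_{i,j}$'s, and a $v$-word is ``marked'' if it contains at least one marked position). This quantity \emph{is} nondecreasing, because each marked $v$-word occurrence, upon replacement by $u_{\phi(i,j)}$, yields at least one marked $v$-word occurrence --- exactly what part~3 gives. Part~4 then supplies the strict increase: the special $v_{\textbf{i},\textbf{j}}$ persists in every $w^{(t)}$ (since $\phi(\textbf{i},\textbf{j}) = \textbf{i}$ means it is replaced by $u_{\textbf{i}}$, which contains it again), and each such replacement produces at least one \emph{additional} marked $v$-word occurrence coming from the marked position in $u_{\textbf{i}}$ lying outside $v_{\textbf{i},\textbf{j}}$. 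Hence $w^{(t)}$ has at least $t$ marked $v$-word occurrences, and therefore at least $t$ marked symbols. This switch of invariant --- from marked symbols to marked $v$-word occurrences --- is precisely the missing idea in your outline.
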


\begin{proof}
 
Call each $v_{i,j}$ a $v$-word and call a $v$-word marked if it contains a marked position.  We will show by induction on $t$ that for all $t \geq 0$, $w^{(t)}$ contains at least $t$ occurrences of marked $v$-words.  Obviously the statement holds for $t = 0$.  So say $t \geq 1$ and suppose for induction that $w^{(t-1)}$ contains at least $t-1$ occurrences of marked $v$-words.  By part 3 of Theorem \ref{pumpable}, for every marked $v_{i,j}$ in $w^{(t-1)}$, $v_{i,j}^{(1)}$ contains a marked position.  Then $w^{(t)}$ contains at least $t-1$ occurrences of marked v-words.  Now by part 4 of the theorem, there is an $(i,j) \in I$ such that $\phi(i,j) = i$ and there is at least one marked position in $u_i$ but outside of $v_{i,j}$.  Then $v_{i,j}^{(1)} = u_i$ contains at least one more occurrence of a marked $v$-word than $v_{i,j}$.  Now since $w$ contains $v_{i,j}$, $w^{(t-1)}$ contains $v_{i,j}$.  Then $w^{(t)}$ contains at least one more occurrence of a marked $v$-word than $w^{(t-1)}$.  So $w^{(t)}$ contains at least $t$ occurrences of marked $v$-words, completing the induction.  Thus for all $t \geq 0$, $w^{(t)}$ contains at least $t$ occurrences of marked $v$-words, hence $w^{(t)}$ contains at least $t$ marked positions.\qed

\end{proof}

\section{Applications}\label{sec:applications}

In this section we give some applications of our pumping lemma for indexed languages (Theorem \ref{pumpable}).  We prove for IL a theorem about frequent and rare symbols which is proved in \cite{rabkin} for ET0L languages.  Then we characterize the infinite words determined by indexed languages.

\subsection{Frequent and rare symbols}

Let $L$ be a language over an alphabet $A$, and $B \subseteq A$.  $B$ is \textbf{nonfrequent} if there is a constant $c_B$ such that $\#_B(w) \leq c_B$ for all $w \in L$.  Otherwise it is called \textbf{frequent}.  $B$ is called \textbf{rare} if for every $k \geq 1$, there is an $n_k \geq 1$ such that for all $w \in L$, if $\#_B(w) \geq n_k$ then the distance between any two appearances in $w$ of symbols from $B$ is at least $k$.

\begin{theorem}\label{rare}Let $L$ be an indexed language over an alphabet $A$, and $B \subseteq A$.  If $B$ is rare in $L$, then $B$ is nonfrequent in $L$.\end{theorem}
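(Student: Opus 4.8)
I would prove the theorem by establishing its contrapositive: assuming $B$ is \emph{frequent} in $L$, I will show that $B$ is not rare. Since $B$ is frequent and $L$ is indexed, fix a threshold $l$ for $L$ as in Theorem~\ref{pumpable} and choose some $w \in L$ with $\#_B(w) \geq l$; mark exactly those positions of $w$ that carry a symbol of $B$, so $w$ has at least $l$ marked positions. Applying Theorem~\ref{pumpable} to this marked word yields the factorizations $w = u_1 \dotsm u_n$, $u_i = v_{i,1}\dotsm v_{i,n_i}$, the map $\phi : I \to \{1,\dotsc,n\}$, and in particular a pair $(\mathbf{i},\mathbf{j}) \in I$ with $\phi(\mathbf{i},\mathbf{j}) = \mathbf{i}$ together with a marked position of $u_{\mathbf{i}}$ that lies outside $v_{\mathbf{i},\mathbf{j}}$.

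The core of the argument is that a single application of the substitution already produces two $B$-symbols that are confined to a factor of bounded length which is thereafter copied verbatim by every further iteration. The marked position of $u_{\mathbf{i}}$ outside $v_{\mathbf{i},\mathbf{j}}$ must fall inside $v_{\mathbf{i},j'}$ for some $j' \neq \mathbf{j}$, so $v_{\mathbf{i},j'}$ contains a marked position; by part~3 of Theorem~\ref{pumpable}, $v_{\mathbf{i},j'}^{(1)} = u_{\phi(\mathbf{i},j')}$ then contains a symbol of $B$. Also $v_{\mathbf{i},\mathbf{j}}^{(1)} = u_{\phi(\mathbf{i},\mathbf{j})} = u_{\mathbf{i}}$ contains a symbol of $B$, namely the one guaranteed by part~4. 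Since $j' \neq \mathbf{j}$, the factor $u_{\mathbf{i}}^{(1)} = v_{\mathbf{i},1}^{(1)}\dotsm v_{\mathbf{i},n_{\mathbf{i}}}^{(1)}$ thus contains two distinct occurrences of symbols of $B$, at distance at most $d-1$, where $d := |u_{\mathbf{i}}^{(1)}|$ is a fixed constant (a concatenation of finitely many of the fixed strings $u_1,\dotsc,u_n$, hence independent of the iteration count). Moreover, because $v_{\mathbf{i},\mathbf{j}}^{(t)} = u_{\mathbf{i}}^{(t-1)}$ for every $t \geq 1$, a short downward induction shows that $u_{\mathbf{i}}^{(1)}$ occurs as a contiguous factor of $u_{\mathbf{i}}^{(t)}$, hence of $w^{(t)} = u_1^{(t)}\dotsm u_n^{(t)}$, for all $t \geq 1$; so this pair of nearby $B$-symbols survives, at the same distance, in every $w^{(t)}$.

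To finish, I combine this with the growth guarantee: by part~2 of Theorem~\ref{pumpable}, $w^{(t)} \in L$ for all $t$, and by Lemma~\ref{infinite}, $\#_B(w^{(t)}) \geq t$. I claim $k := d$ witnesses that $B$ is not rare: for any candidate bound $m \geq 1$, the word $w^{(m)}$ lies in $L$, satisfies $\#_B(w^{(m)}) \geq m$, yet contains two symbols of $B$ at distance $< d = k$, so $m$ cannot serve as an $n_k$. Hence no $n_k$ exists for this $k$, i.e. $B$ is not rare, which is the contrapositive of the theorem. The step needing the most care is pinpointing a pair of $B$-occurrences whose separation stays bounded under iteration: since the factor $v_{\mathbf{i},\mathbf{j}}$ itself may be empty or carry no marked position, the argument must be run at level $1$ (where $v_{\mathbf{i},\mathbf{j}}^{(1)} = u_{\mathbf{i}}$ is guaranteed to be marked) rather than at level $0$, and one must check that $u_{\mathbf{i}}^{(1)}$ really does recur unaltered as a block; everything else is routine bookkeeping. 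This parallels the ET0L case of \cite{rabkin}, with Theorem~\ref{pumpable} and Lemma~\ref{infinite} taking over the roles played there by the ET0L pumping lemma and its companion statement.
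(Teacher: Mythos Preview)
Your proof is correct and follows essentially the same approach as the paper's: mark the $B$-positions in a word with at least $l$ of them, use part~4 of Theorem~\ref{pumpable} to locate two marked positions inside $u_{\mathbf{i}}^{(1)}$ at bounded distance, observe that $u_{\mathbf{i}}^{(1)}$ recurs as a factor of every $w^{(t)}$ via the fixed point $\phi(\mathbf{i},\mathbf{j})=\mathbf{i}$, and invoke Lemma~\ref{infinite} to violate rarity. Your write-up is in fact slightly more explicit than the paper's on two points---why $u_{\mathbf{i}}^{(1)}$ contains two distinct marked positions (you name the second factor $v_{\mathbf{i},j'}$), and why $u_{\mathbf{i}}^{(1)}$ persists in all later iterates (you spell out the induction $v_{\mathbf{i},\mathbf{j}}^{(t)}=u_{\mathbf{i}}^{(t-1)}$)---whereas the paper states these in a single clause.
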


\begin{proof}

Suppose $B$ is rare and frequent in $L$.  By Theorem \ref{pumpable}, $L$ has a threshold $l \geq 0$.  Since $B$ is frequent in $L$, there is a $w \in L$ with more than $l$ symbols from $B$.  If we mark them all, parts 1 to 4 of the theorem apply.  By part 3 of the theorem, if $v_{i,j}$ contains a marked position then so does $u_{\phi(i,j)}$, and by part 4 of the theorem, there is an $(i,j) \in I$ such that $\phi(i,j) = i$ and there is at least one marked position in $u_i$ but outside of $v_{i,j}$.  Then $u_i^{(1)}$ contains at least two marked positions.  So take any two marked positions in $u_i^{(1)}$ and let $d$ be the distance between them.  Let $k = d+1$.  Since $B$ is rare in $L$, there is an $n_k \geq 1$ such that for all $w' \in L$, if $\#_B(w') \geq n_k$ then the distance between any two appearances in $w'$ of symbols from $B$ is at least $k$.  By Lemma \ref{infinite}, $w^{(n_k)}$ contains at least $n_k$ marked symbols, so $\#_B(w^{(n_k)}) \geq n_k$.  Then the distance between any two appearances in $w^{(n_k)}$ of symbols from $B$ is at least $k$.  Since $u_i$ appears in $w$ and $u_i^{(1)}$ contains $u_i$, $u_i^{(1)}$ appears in $w^{(t)}$ for all $t \geq 1$.  Then $u_i^{(1)}$ appears in $w^{(n_k)}$ and contains two symbols from $B$ separated by $d < k$, a contradiction.  So if $B$ is rare in $L$, then $B$ is nonfrequent in $L$.\qed

\end{proof}

Theorem \ref{rare} gives us an alternative proof of the result of Hayashi \cite{hayashi} and Gilman \cite{gilman} that the language $L$ below is not indexed.

\begin{corollary}[\cite{hayashi} Theorem 5.3; \cite{gilman} Corollary 4]The language $L = \{(\textup{\texttt{ab}}^n)^n \mid n \geq 1\}$ is not indexed.\end{corollary}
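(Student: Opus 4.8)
The plan is to apply Theorem~\ref{rare} by showing that the set $B = \{\texttt{a}\}$ is rare but frequent in $L = \{(\texttt{ab}^n)^n \mid n \geq 1\}$, which contradicts the theorem and establishes that $L$ is not indexed. First I would check that $B = \{\texttt{a}\}$ is frequent: the word $w_n = (\texttt{ab}^n)^n$ contains exactly $n$ occurrences of \texttt{a}, so $\#_B(w_n) = n$ is unbounded as $n$ grows, hence there is no constant $c_B$ bounding $\#_B$ on $L$. This is immediate.

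Next I would verify that $B = \{\texttt{a}\}$ is rare in $L$. Given $k \geq 1$, I need an $n_k$ such that whenever $w \in L$ has $\#_B(w) \geq n_k$, any two occurrences of \texttt{a} in $w$ are at distance at least $k$. Every word in $L$ has the form $w_n = (\texttt{ab}^n)^n$ for some $n$, and in $w_n$ consecutive occurrences of \texttt{a} are separated by a block $\texttt{b}^n$, so the distance between any two (in particular any two consecutive) occurrences of \texttt{a} is at least $n+1 > n$. Thus if $\#_B(w_n) = n \geq k$, i.e. if we take $n_k = k$, then every word $w \in L$ with $\#_B(w) \geq n_k$ is some $w_n$ with $n \geq k$, and its \texttt{a}'s are at distance $\geq n+1 \geq k+1 > k$. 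Hence $B$ is rare in $L$.

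Having established that $\{\texttt{a}\}$ is both rare and frequent in $L$, Theorem~\ref{rare} (applied with this $B$) is contradicted, since it asserts that any rare subset of an indexed language is nonfrequent. Therefore $L$ cannot be indexed. I do not anticipate any real obstacle here: the corollary is a direct instantiation of Theorem~\ref{rare}, and the only things to confirm are the two elementary combinatorial facts about the structure of words in $L$ (unbounded \texttt{a}-count, and \texttt{a}'s forced far apart once there are many of them), both of which follow immediately from the explicit form $(\texttt{ab}^n)^n$. If anything needs care, it is just making sure the quantifiers in the definition of ``rare'' are matched correctly — namely that the choice $n_k = k$ works uniformly for all words in $L$, which it does because the single parameter $n$ controls both the count and the spacing of the \texttt{a}'s.
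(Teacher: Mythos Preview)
Your proposal is correct and follows exactly the same approach as the paper: apply Theorem~\ref{rare} with $B = \{\texttt{a}\}$, noting that this set is both frequent and rare in $L$. The paper's proof simply asserts these two facts without spelling out the verifications you give, but the argument is identical.
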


\begin{proof}The subset $\{\texttt{a}\}$ of $\{\texttt{a},\texttt{b}\}$ is rare and frequent in $L$.  So by Theorem \ref{rare}, $L$ is not indexed.\qed\end{proof}

\subsection{CD0L and morphic words}

Next, we turn to characterizing the infinite words determined by indexed languages.  We show that every infinite indexed language has an infinite CD0L subset, which then implies that $\omega(IL)$ contains exactly the morphic words.  This is a new result which we were not able to obtain using the pumping lemmas of \cite{hayashi} or \cite{gilman}.

\begin{theorem}\label{indexcdol}Let $L$ be an infinite indexed language.  Then $L$ has an infinite CD0L subset.\end{theorem}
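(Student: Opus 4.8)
The plan is to use the pumping lemma (Theorem~\ref{pumpable}) to extract from $L$ an infinite sequence of words that is generated by a single morphism applied iteratively, read off through a coding. Start with an infinite indexed language $L$ with threshold $l$. Since $L$ is infinite, pick $w \in L$ with $|w| \geq l$ and mark \emph{every} position of $w$. Then Theorem~\ref{pumpable} gives the factorization $w = u_1 \dotsm u_n$, $u_i = v_{i,1} \dotsm v_{i,n_i}$, and the map $\phi : I \to \{1,\dots,n\}$, producing the words $w^{(t)}$ for all $t \geq 0$, all of which lie in $L$. By Lemma~\ref{infinite}, $|w^{(t)}| \geq \#\text{(marked symbols)} \to \infty$, so $\{w^{(t)} \mid t \geq 0\}$ is an infinite subset of $L$.

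The key step is to recognize the replacement operation as an HD0L (in fact D0L-plus-coding) system. First I would pass to a ``symbol-level'' version of the operation. Introduce a fresh alphabet whose letters are the pairs $(i,j) \in I$ together with letters for the individual terminal symbols of $w$; think of each $v_{i,j}$ as a single abstract letter. Define a morphism $\mu$ by $\mu((i,j)) = (\phi(i,j),1)(\phi(i,j),2)\dotsm(\phi(i,j),n_{\phi(i,j)})$, since $v_{i,j}^{(t+1)} = u_{\phi(i,j)}^{(t)} = v_{\phi(i,j),1}^{(t)}\dotsm v_{\phi(i,j),n_{\phi(i,j)}}^{(t)}$, and on any letter that is not of the form $(i,j)$ let $\mu$ be the identity. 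Let the start word be $W = (1,1)(1,2)\dotsm(1,n_1)(2,1)\dotsm(n,n_n)$, the list of all $v$-subscripts in order, so that $W$ encodes $w^{(0)}$ after we substitute each $(i,j)$ by the actual string $v_{i,j}$. Then $\mu^t(W)$ encodes $w^{(t)}$. Applying one final morphism $g$ that sends each letter $(i,j)$ to the terminal string $v_{i,j}$ and fixes terminal letters recovers $w^{(t)}$ exactly; $(A, \mu, W, g)$ is then an HD0L system whose language is $\{w^{(t)}\}$, and since $\omega(\text{HD0L}) = \omega(\text{CD0L})$ and likewise the corresponding fact about infinite subsets follows from the machinery of \cite{smith2}, we can convert it to a CD0L language. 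Alternatively, since the point is only to produce \emph{an} infinite CD0L subset determining the same word (or, here, just an infinite CD0L subset of $L$), I would invoke the standard L-system normal forms: an HD0L language with strictly increasing lengths can be converted to a CD0L language by the construction that simulates the nonerasing morphism $\mu$ together with $g$ using an enlarged alphabet and a coding — this is exactly the CD0L~$\subset$~HD0L side together with the observation in the preliminaries that for infinite words $\omega(\text{CD0L}) = \omega(\text{HD0L})$, applied via Remark~3.

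The main obstacle I anticipate is the bookkeeping to make $\mu$ genuinely a morphism on a finite alphabet whose iteration faithfully tracks the $w^{(t)}$. One subtlety: the operation in part~2 replaces each $v_{i,j}$ by the string $u_{\phi(i,j)}$, but $u_{\phi(i,j)}$ is itself a concatenation of $v$-words indexed by $(\phi(i,j),1),\dots,(\phi(i,j),n_{\phi(i,j)})$, and these in turn get replaced at the next stage — so the alphabet of ``$v$-subscript letters'' is closed under $\mu$, which is what makes the D0L description work. A second subtlety: some $v_{i,j}$ may be the empty string (part~4 of Theorem~\ref{pumpable} explicitly allows this), so $g$ is erasing; but $\mu$ itself need not be erasing, so the D0L part still has the growth needed, and Lemma~\ref{infinite} guarantees $|w^{(t)}| \to \infty$, so the resulting CD0L language is genuinely infinite. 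Finally one should check that $\{w^{(t)}\}$ really is a subset of $L$ — but that is precisely the content of part~2 of Theorem~\ref{pumpable} (``this process can be applied iteratively to always yield a word in $L$''), so no separate argument is needed. Putting these pieces together yields an infinite CD0L subset of $L$, which is the claim.
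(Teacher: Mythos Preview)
Your proposal is correct and matches the paper's own proof essentially step for step: both pick a long enough $w$, mark every position, encode the replacement operation of Theorem~\ref{pumpable} as an HD0L system on the alphabet $I$ via the morphism sending $(i,j)$ to the word of subscripts of $u_{\phi(i,j)}$ and the output morphism $(i,j)\mapsto v_{i,j}$, and then pass from HD0L to CD0L by citing \cite{smith2}. The only cosmetic differences are your (harmless but unneeded) inclusion of terminal letters in the D0L alphabet, and that the paper makes the HD0L$\to$CD0L step precise by invoking a specific theorem of \cite{smith2} (``every infinite HD0L language has an infinite CD0L subset'') rather than the weaker $\omega(\text{HD0L})=\omega(\text{CD0L})$.
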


\begin{proof}

By Theorem \ref{pumpable}, $L$ has a threshold $l \geq 0$.  Take any $w \in L$ such that $|w| \geq l$, and mark every position in $w$.  Then parts 1 to 4 of the theorem apply.  Now for each $(i,j) \in I$, create a new symbol $x_{i,j}$.  Let $X$ be the set of these symbols.  For $i$ from 1 to $n$, let $x_i = x_{i,1} x_{i,2} \dotsm x_{i,n_i}$.  Let $x = x_1 x_2 \dotsm x_n$.  Let $h$ be a morphism such that $h(x_{i,j}) = x_{\phi(i,j)}$ for all $(i,j) \in I$.  Let $g$ be a morphism such that $g(x_{i,j}) = v_{i,j}$ for all $(i,j) \in I$.  Let $A$ be the alphabet of $L$.  Let $G$ be the HD0L system $(X \cup A,h,x,g)$.  Then for all $t \geq 0$, $g(h^t(x)) = w^{(t)}$.  By Lemma \ref{infinite}, for all $t \geq 0, |w^{(t)}| \geq t$.  Then $L(G)$ is an infinite HD0L subset of $L$.  By Theorem 18 of \cite{smith2}, every infinite HD0L language has an infinite CD0L subset.  Therefore $L$ has an infinite CD0L subset.\qed


\end{proof}

\begin{theorem}\label{morphic}$\omega$(IL) contains exactly the morphic words.\end{theorem}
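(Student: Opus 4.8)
The plan is to prove the two inclusions $\omega(\textrm{IL}) \subseteq \{\text{morphic words}\}$ and $\{\text{morphic words}\} \subseteq \omega(\textrm{IL})$ separately, leaning on the machinery already assembled. For the forward inclusion, suppose $\alpha \in \omega(\textrm{IL})$, so there is an indexed language $L$ that determines $\alpha$. By Remark 2, $L$ is an infinite prefix language, so by Theorem \ref{indexcdol} it has an infinite CD0L subset $L'$. By Remark 3, $L'$ still determines $\alpha$, so $\alpha \in \omega(\textrm{CD0L})$. By the result of \cite{smith2} quoted in the Preliminaries (namely $\omega(\textrm{CD0L}) = \omega(\textrm{HD0L}) = \omega(\textrm{ET0L})$, and this class is exactly the morphic words), $\alpha$ is morphic. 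This direction is essentially immediate once Theorem \ref{indexcdol} is in hand.

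For the reverse inclusion, let $\alpha$ be an arbitrary morphic word. Again invoking the characterization from \cite{smith2}, the morphic words coincide with $\omega(\textrm{CD0L})$, so there is a CD0L language $L$ determining $\alpha$. Since CD0L $\subset$ IL (stated in the Preliminaries, from \cite{krs,ers}), $L$ is itself an indexed language, and it still determines $\alpha$; hence $\alpha \in \omega(\textrm{IL})$. Combining the two inclusions gives that $\omega(\textrm{IL})$ is exactly the set of morphic words.

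I do not expect any real obstacle here: both directions are short chains of already-established facts, and all the heavy lifting — the new pumping lemma (Theorem \ref{pumpable}), Lemma \ref{infinite}, and the CD0L-subset construction (Theorem \ref{indexcdol}) — has been done. The only point requiring a moment's care is making sure the \cite{smith2} characterization is being used in both directions: it supplies $\omega(\textrm{CD0L}) = \{\text{morphic words}\}$, which is needed once to pass from a CD0L subset up to "morphic" and once to produce a CD0L language from a given morphic word. One should also note explicitly that "$L$ determines $\alpha$" is preserved both under passing to infinite subsets (Remark 3) and under enlarging the ambient language class, so no subtlety is lost in the transitions between IL, CD0L, and the morphic words.
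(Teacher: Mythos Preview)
Your proposal is correct and matches the paper's own proof essentially step for step: both directions use Theorem~\ref{indexcdol} together with the $\omega(\textrm{CD0L}) = \{\text{morphic words}\}$ characterization from \cite{smith2} and the inclusion $\textrm{CD0L} \subset \textrm{IL}$. The only difference is cosmetic---you explicitly invoke Remarks~2 and~3 where the paper leaves these implicit---so there is nothing to add.
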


\begin{proof}For any infinite word $\alpha \in \omega(IL)$, some $L \in IL$ determines $\alpha$.  Then $L$ is an infinite indexed language, so by Theorem \ref{indexcdol}, $L$ has an infinite CD0L subset $L'$.  Then $L'$ determines $\alpha$, so $\alpha$ is in $\omega$(CD0L).  Then by Theorem 23 of \cite{smith2}, $\alpha$ is morphic.  For the other direction, by Theorem 23 of \cite{smith2}, every morphic word is in $\omega$(CD0L), so since CD0L $\subset$ IL, every morphic word is in $\omega$(IL).\qed\end{proof}
 
Theorem \ref{morphic} lets us use existing results about morphic words to show that certain languages are not indexed, as the following example shows.

\begin{corollary}Let $L = \{\textup{\texttt{0}}$, $\textup{\texttt{0:1}}$, $\textup{\texttt{0:1:01}}$, $\textup{\texttt{0:1:01:11}}$, $\textup{\texttt{0:1:01:11:001}}$, $\dotsc\}$, the language containing for each $n \geq 0$ a word with the natural numbers up to $n$ written in backwards binary and colon-separated.  Then $L$ is not indexed.\end{corollary}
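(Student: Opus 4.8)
The plan is to invoke Theorem~\ref{morphic}. First, note that $L$ is, by construction, an infinite prefix language: the word listed for parameter $n$ is obtained from the word for $n-1$ by appending a colon and the backwards binary representation of $n$, so the former is a prefix of the latter. Hence $L$ determines an infinite word $\alpha$, namely the concatenation of the backwards binary representations of $0,1,2,\dotsc$ with colons between consecutive ones:
\[
\alpha = \texttt{0:1:01:11:001:101:011:111:0001:}\dotsm
\]
If $L$ were indexed then $\alpha$ would lie in $\omega(\mathrm{IL})$, hence be morphic by Theorem~\ref{morphic}. So it suffices to prove that $\alpha$ is \emph{not} morphic, and for that I would bound its subword complexity from below.

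The key step is the claim that, for every $k \geq 1$, $\alpha$ has at least $2^k$ distinct subwords of length $k$. Given any binary string $t = t[1]\dotsm t[k] \in \{\texttt{0},\texttt{1}\}^k$, consider the string $t\texttt{1}$ read as a backwards binary numeral (least significant bit first): it represents the natural number $m = 2^k + \sum_{i=1}^{k} t[i]\,2^{i-1}$, and since its last bit is $\texttt{1}$ it is exactly the canonical backwards binary representation of $m$. As $m \geq 1$, the block $t\texttt{1}$ occurs in $\alpha$ delimited by colons, so its prefix $t$ occurs as a (contiguous) subword of $\alpha$. Distinct $t$ clearly give distinct subwords, so $\alpha$ has at least $2^k$ subwords of length $k$; that is, the subword complexity of $\alpha$ is at least exponential in the length.

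Finally, it is known (see \cite{as}) that the subword complexity $p_\beta(n)$ of any morphic word $\beta$ satisfies $p_\beta(n) = O(n^2)$. Since $2^k$ is not $O(k^2)$, $\alpha$ is not morphic. By Theorem~\ref{morphic}, no indexed language determines $\alpha$; but if $L$ were indexed it would, being an infinite prefix language, determine $\alpha$ --- a contradiction. Hence $L$ is not indexed. I expect the only point requiring care to be the subword-complexity lower bound, which reduces to the elementary observation above about backwards binary numerals, together with correctly quoting the quadratic upper bound on the subword complexity of morphic words; the rest follows immediately from Theorem~\ref{morphic}.
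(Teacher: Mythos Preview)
Your argument is correct and follows the paper's overall strategy exactly: observe that $L$ is an infinite prefix language determining the word $\alpha$, and conclude via Theorem~\ref{morphic} once $\alpha$ is known not to be morphic. The only difference is in how you establish that $\alpha$ is not morphic. The paper simply invokes an external result (Theorem~3 of \cite{ck}), whereas you supply a self-contained subword-complexity argument: every binary string $t$ of length $k$ embeds as a prefix of the backwards binary block $t\texttt{1}$, so $p_\alpha(k) \geq 2^k$, which contradicts the $O(n^2)$ bound for morphic words (indeed found in \cite{as}). Your route is more elementary and avoids the dependency on \cite{ck}; the paper's route is shorter but pushes the work into the citation. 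Both are valid, and the rest of the deduction from Theorem~\ref{morphic} is identical.
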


\begin{proof}$L$ determines the infinite word $\alpha = \texttt{0:1:01:11:001:101:011:111:}\dotsm$.  By Theorem 3 of \cite{ck}, $\alpha$ is not morphic.  Then by our Theorem \ref{morphic}, $\alpha$ is not in $\omega$(IL), so no language in IL determines $\alpha$, hence $L$ is not indexed.\qed\end{proof}

\section{Conclusion}\label{sec:conclusion}

In this paper we have characterized the infinite words determined by indexed languages, showing that they are exactly the morphic words.  In doing so, we proved a new pumping lemma for the indexed languages, which may be of independent interest and which we hope will have further applications.  One direction for future work is to look for more connections between formal languages and infinite words via the notion of prefix languages.  It would be interesting to see what other language classes determine the morphic words, and what language classes are required to determine infinite words that are not morphic.  More generally, for any language class, we can ask what class of infinite words it determines, and for any infinite word, we can ask in what language classes it can be determined, yielding many opportunities for future research.  It is hoped that work in this area will help to build up a theory of the complexity of infinite words with respect to what language classes can determine them.

\subsubsection*{Acknowledgments.} I want to thank my advisor, Rajmohan Rajaraman, for supporting this work, encouraging me, and offering many helpful comments and suggestions.

\bibliographystyle{splncs03}
\bibliography{../sources}

\begin{thebibliography}{10}
\providecommand{\url}[1]{\texttt{#1}}
\providecommand{\urlprefix}{URL }

\bibitem{aho}
Aho, A.V.: Indexed grammars - an extension of context-free grammars. J. ACM
  15(4),  647--671 (1968)

\bibitem{aho2}
Aho, A.V.: Nested stack automata. J. ACM  16(3),  383--406 (Jul 1969)

\bibitem{as}
Allouche, J.P., Shallit, J.: Automatic Sequences: Theory, Applications,
  Generalizations. Cambridge University Press, New York, NY, USA (2003)

\bibitem{berstel}
Berstel, J.: Properties of infinite words : Recent results. In: STACS 89,
  Lecture Notes in Computer Science, vol. 349, pp. 36--46. Springer Berlin
  Heidelberg (1989)

\bibitem{book}
Book, R.V.: On languages with a certain prefix property. Mathematical Systems
  Theory  10,  229--237 (1977)

\bibitem{bc}
Braud, L., Carayol, A.: Linear orders in the pushdown hierarchy. In: ICALP 2010
  Part II, Lecture Notes in Computer Science, vol. 6199, pp. 88--99. Springer
  Berlin Heidelberg (2010)

\bibitem{ck}
Culik, K., Karhum\"{a}ki, J.: Iterative devices generating infinite words. Int.
  J. Found. Comput. Sci.  5(1),  69--97 (1994)

\bibitem{ers}
Ehrenfeucht, A., Rozenberg, G., Skyum, S.: A relationship between {ET0L} and
  {EDT0L} languages. Theoretical Computer Science  1(4),  325--330 (1976)

\bibitem{gazdar}
Gazdar, G.: Applicability of indexed grammars to natural languages. In: Reyle,
  U., Rohrer, C. (eds.) Natural Language Parsing and Linguistic Theories,
  Studies in Linguistics and Philosophy, vol.~35, pp. 69--94. Springer
  Netherlands (1988)

\bibitem{gilman}
Gilman, R.H.: A shrinking lemma for indexed languages. Theor. Comput. Sci.
  163(1-2),  277--281 (Aug 1996)

\bibitem{hayashi}
Hayashi, T.: {On derivation trees of indexed grammars: an extension of the
  {$uvwxy$}-theorem}. Publications of The Research Institute for Mathematical
  Sciences  9,  61--92 (1973)

\bibitem{hu}
Hopcroft, J., Ullman, J.: Introduction to automata theory, languages, and
  computation. Addison-Wesley series in computer science, Addison-Wesley (1979)

\bibitem{krs}
Kari, L., Rozenberg, G., Salomaa, A.: In: Rozenberg, G., Salomaa, A. (eds.)
  Handbook of Formal Languages, Vol. 1, chap. L systems, pp. 253--328.
  Springer-Verlag New York, Inc., New York, NY, USA (1997)

\bibitem{latteux}
Latteux, M.: Une note sur la propri{\'e}t{\'e} de prefixe. Mathematical Systems
  Theory  11,  235--238 (1978)

\bibitem{maslov}
Maslov, A.N.: Multilevel stack automata. Problems of Information Transmission
  12,  38--43 (1976)

\bibitem{rabkin}
Rabkin, M.: Ogden's lemma for {ET0L} languages. In: Proceedings of the 6th
  International Conference on Language and Automata Theory and Applications.
  pp. 458--467. LATA'12, Springer-Verlag, Berlin, Heidelberg (2012)

\bibitem{rs}
Rozenberg, G., Salomaa, A.: Mathematical Theory of L Systems. Academic Press,
  Inc., Orlando, FL, USA (1980)

\bibitem{smith2}
Smith, T.: On infinite words determined by {L} systems. In: Karhum\"{a}ki, J.,
  Lepist\"{o}, A., Zamboni, L. (eds.) Combinatorics on Words, Lecture Notes in
  Computer Science, vol. 8079, pp. 238--249. Springer Berlin Heidelberg (2013)

\bibitem{smith3}
Smith, T.: {On Infinite Words Determined by Stack Automata}. In: FSTTCS 2013.
  Leibniz International Proceedings in Informatics (LIPIcs), vol.~24, pp.
  413--424. Schloss Dagstuhl--Leibniz-Zentrum fuer Informatik, Dagstuhl,
  Germany (2013)

\end{thebibliography}

\newpage

\appendix
\section{Appendix}

We give some proofs which were sketched or omitted from the body.

\subsection{Proposition \ref{grounded}}

\grounded*

\begin{proof}Let $G = (N,T,F,P,S)$.  Add a nonterminal $S'$ to $N$ and replace every occurrence of $S$ in $P$ with $S'$.  Then add a symbol $\$$ to $F$ and add to $P$ the production $S \rightarrow S'\$$.  Next, for every $t \in T$, add a nonterminal $X_t$ to $N$ and replace every occurrence of $t$ in $P$ with $X_t$.  Then add a nonterminal $X_\lambda$ to $N$ and replace every production of the form $A \rightarrow \lambda$ with $A \rightarrow X_\lambda$ and every production of the form $Af \rightarrow \lambda$ with $Af \rightarrow X_\lambda$.  Finally, for every $s \in T \cup \{\lambda\}$, add to $P$ the production $X_s\$ \rightarrow s$ and for every $f \in F$, add to $P$ the production $X_s f \rightarrow X_s$.  The resulting grammar $G'$ is grounded and $L(G') = L(G)$.\qed\end{proof}

\subsection{Lemma \ref{limited}}

We now give a full proof of Lemma \ref{limited}, filling out the sketch given in the body.  We prove two supporting lemmas in order to prove the main lemma.  First we give some definitions.  Let $G = (N,T,F,P,S)$ be a grounded indexed grammar and let $D$ be a derivation tree of $G$.  For a node $v$, $D(v)$ means the subtree of $D$ whose root is $v$.  We denote the terminal yield of $D(v)$ by yield($v$).  For nodes $v_1,v_2$ in $D$, we say that $v_1$ is to the left of $v_2$, and $v_2$ is to the right of $v_1$, if neither node is descended from the other, and if $v_1$ would be encountered before $v_2$ in a depth-first traversal of $D$ in which edges are chosen from left to right.  A node $v_1$ is \textbf{reachable} from a node $v_2$ if $v_1$ is identical to $v_2$ or descended from $v_2$.

We will be working with a variation of a path which we call a descent.  A \textbf{descent} $H$ in $D$ is a list of internal nodes $(v_0, \dotsc, v_m)$ with $m \geq 0$ such that $v_m$ is in $\beta(v_0)$ and for each $i \geq 1$, $v_i$ is a descendant (not necessarily a child) of $v_{i-1}$.  As with paths, we will sometimes refer to nodes in $H$ by their indices; e.g. $\sigma(i)$ in the context of $H$ means $\sigma(v_i)$.  For $0 \leq i < m$, we say there is a \textbf{split} in $H$ between $i$ and $i+1$ iff any of the nodes on the path in $D$ from $v_i$ (inclusive) to $v_{i+1}$ (exclusive) is a branch node.  If this path in $D$ has more than one branch node, we still say that there is just one split between $i$ and $i+1$ in $H$.  Thus $H$ has at most $m$ splits.


We call $H$ \textbf{controlled} if for all $0 < i \leq m$, $i$ is a child (not merely a descendant) of $i-1$.  Notice that any path in $D$ from the root to a leaf (excluding the leaf) is a controlled descent.  We call $H$ \textbf{flat} if for all $0 \leq i \leq m$, $\eta(i) = \eta(0)$.  We call $H$ \textbf{limited} iff for all $0 \leq b_1 < t_1 < t_2 \leq b_2 \leq m$ such that
\begin{itemize}

\item $\sigma(b_1) = \sigma(t_1)$ and $\sigma(t_2) = \sigma(b_2)$,
\item $b_2$ is in $\beta(b_1)$ and $t_2$ is in $\beta(t_1)$, and
\item $\tau(b_1) = \tau(t_1)$,

\end{itemize}
there are no splits between $b_1$ and $t_1$ or between $t_2$ and $b_2$.

\begin{lemma}\label{split2}Let $H = (v_0, \dotsc, v_m)$ be a limited flat descent.  Then $H$ has at most $|N|\cdot3^{|N|}$ splits.\end{lemma}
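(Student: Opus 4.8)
\textbf{Proof proposal for Lemma \ref{split2}.}

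The plan is to argue by contradiction: suppose $H = (v_0, \dotsc, v_m)$ is a limited flat descent with more than $|N|\cdot 3^{|N|}$ splits, and derive a configuration that violates limitedness. Since $H$ is flat, every node $v_i$ has the same stack height $\eta(v_i) = \eta(v_0)$, and moreover every $v_i$ lies in $\beta(v_0)$ is not quite right — rather, the key structural fact I would establish first is that for a flat descent, $v_j \in \beta(v_i)$ for all $i \le j$: the subpath in $D$ from $v_i$ down to $v_j$ can be taken through the $v_k$'s (using that $H$ is a descent, so consecutive nodes are in a descendant relation, and $v_m \in \beta(v_0)$ forces the intermediate stack heights never to drop below $\eta(v_0)$), and since all the $\eta(v_k)$ equal $\eta(v_0)$, the "last nodes in scope" condition holds. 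This means the middle two conditions of the limitedness hypothesis ($b_2 \in \beta(b_1)$, $t_2 \in \beta(t_1)$) come for free for any choice $b_1 < t_1 < t_2 \le b_2$ of indices along $H$.

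Next I would attach to each index $i$ the pair $(\sigma(v_i), \tau(v_i))$, which ranges over a set of size at most $|N| \cdot 3^{|N|}$ (there are $|N|$ choices for $\sigma$ and $3^{|N|}$ choices for $\tau$, as noted right before the statement of Lemma \ref{limited}). The idea is a pigeonhole on the sequence of these pairs as we walk down $H$, looking at the indices immediately after splits. If $H$ has more than $|N| \cdot 3^{|N|}$ splits, pick indices $0 \le i_0 < i_1 < \dotsm$ such that there is a split between $i_{k-1}$ and $i_k$ for each $k$; there are more than $|N|\cdot 3^{|N|}$ such "post-split checkpoints," so by pigeonhole two of them, say at indices $p < q$, have $\sigma(v_p) = \sigma(v_q)$ and $\tau(v_p) = \tau(v_q)$. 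Now I would set $b_1 = p$, $t_1 = q$, and $t_2 = b_2 = m$ (so that $b_1 < t_1 < t_2 \le b_2$; the degenerate case $t_1 = m$ is handled by instead taking $p,q$ among checkpoints strictly below $m$, which is still possible since we have a comfortable surplus of splits). Then $\sigma(b_1) = \sigma(t_1)$, $\sigma(t_2) = \sigma(b_2)$ trivially (both are $\sigma(v_m)$), $\tau(b_1) = \tau(t_1)$, and $b_2 \in \beta(b_1)$, $t_2 \in \beta(t_1)$ by the flat-descent fact above. But there is a split between $b_1$ and $t_1$ (we chose them across at least one split), contradicting limitedness. Hence $H$ has at most $|N|\cdot 3^{|N|}$ splits.

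The main obstacle I anticipate is getting the flat-descent scope fact exactly right — in particular making sure that "$v_j \in \beta(v_i)$" really does follow for arbitrary $i \le j$ and not just for $i = 0$, and handling the boundary bookkeeping so that the chosen $b_1 < t_1 < t_2 \le b_2$ genuinely satisfy the strict inequality $t_1 < t_2$. The cleanest fix is to observe that if $m$ itself is one of the pigeonhole checkpoints we can just shift: we have strictly more than $|N|\cdot 3^{|N|}$ checkpoints, so we can always find a repeated $(\sigma,\tau)$-value among checkpoints all lying at indices $< m$, leaving room for $t_2 = b_2 = m$. The rest is a routine pigeonhole counting argument, and flatness is what makes the $\beta(\cdot)$ conditions and the equalities $\sigma(t_2) = \sigma(b_2)$ essentially automatic.
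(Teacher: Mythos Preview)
Your proposal is correct and follows essentially the same line as the paper's proof: both argue by contradiction, apply pigeonhole to the pairs $(\sigma(v_i),\tau(v_i))$ to find $b_1 < t_1 < m$ with matching pair and a split in between, set $t_2 = b_2 = m$, and use flatness to get $m \in \beta(v_i)$ for all $i$, contradicting limitedness. The paper is slightly more economical in that it only ever needs $m \in \beta(v_i)$ (not the stronger $v_j \in \beta(v_i)$ for all $i \le j$ that you tentatively state and then worry about), so you can safely drop that generalization.
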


\begin{proof}

For any $i$ in $H$, there are $|N|$ possible values for $\sigma(i)$ and $3^{|N|}$ possible values for $\tau(i)$.  Suppose there are more than $|N|\cdot3^{|N|}$ splits between $0$ and $m$.  Then there are $b_1,t_1$ such that $0 \leq b_1 < t_1 < m$, $\sigma(b_1) = \sigma(t_1)$, $\tau(b_1) = \tau(t_1)$, and there is a split between $b_1$ and $t_1$.  Let $t_2 = b_2 = m$.  Obviously $\sigma(t_2) = \sigma(b_2)$.  By the definition of a descent, $m$ is in $\beta(0)$.  Then since $H$ is flat, $m$ is in $\beta(i)$ for all $i$ in $H$.  Hence $b_2$ is in $\beta(b_1)$ and $t_2$ is in $\beta(t_1)$.  But then $H$ is not limited, a contradiction.  So $H$ has at most $|N|\cdot3^{|N|}$ splits.\qed

\end{proof}

Let $R$ be the set of possible values of $\tau(v)$; i.e. the set of 3-tuples each of which partitions $N$.  We have $|R| = 3^{|N|}$.  For any descent $H = (v_0, \dotsc, v_m)$ and $W \subseteq N \times N \times R$, $H$ \textbf{respects} $W$ iff for every $0 \leq i < j \leq m$ such that $j$ is in $\beta(i)$, the triple $(\sigma(i), \sigma(j), \tau(i))$ is in $W$.


\begin{lemma}\label{split3}Take any $W \subseteq N \times N \times R$.  Any limited controlled descent which respects $W$ has at most $(|N|\cdot3^{|N|})^{|W|+1}$ splits.\end{lemma}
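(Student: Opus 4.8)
The plan is to prove this by strong induction on $|W|$, mirroring the structure of the sketch of Lemma \ref{limited}. The base case $|W| = 0$ is essentially Lemma \ref{split2}: if $H$ respects the empty set, then $H$ can contain no two nodes $i < j$ with $j \in \beta(i)$; but since $H$ is a descent, $v_m \in \beta(v_0)$, forcing $m = 0$, so $H$ has no splits, and $(|N|\cdot3^{|N|})^{1} \geq 0$. (Actually the more useful base observation is that a flat controlled descent respecting $W$ is handled by Lemma \ref{split2}, since flat descents only need the bound $|N|\cdot3^{|N|} \leq (|N|\cdot3^{|N|})^{|W|+1}$ whenever $|W| \geq 0$.)

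For the inductive step, take a limited controlled descent $H = (v_0,\dots,v_m)$ respecting $W$, with $|W| \geq 1$. Following the sketch, I would isolate the \emph{base} of $H$: the subsequence $B = (v_{i_0}, v_{i_1}, \dots, v_{i_k})$ consisting of those nodes of $H$ whose stack height equals $\eta(v_0)$ (the minimum height along $H$, since $H$ is a controlled descent from $v_0$ down into $\beta(v_0)$, so $i_0 = 0$ and $i_k = m$). Between consecutive base nodes $v_{i_\ell}$ and $v_{i_{\ell+1}}$ sits a ``hill'' $H_\ell$ — the portion of $H$ strictly between them, all of whose nodes have stack height $> \eta(v_0)$. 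The first key claim is that $B$, viewed as a descent (contracting each hill to a single step), is a limited flat descent respecting $W$, so by Lemma \ref{split2} it has at most $|N|\cdot3^{|N|}$ splits; hence at most $|N|\cdot3^{|N|}$ of the hills $H_\ell$ contain a split, and the remaining splits of $H$ all lie inside these ``active'' hills. The second key claim is that each hill $H_\ell$, reindexed so its first node $v_{i_\ell + 1}$ plays the role of $v_0$, is itself a limited controlled descent, strictly shorter than $H$; and — this is the crucial point — it respects a \emph{strictly smaller} set $W_\ell \subsetneq W$. The reason it is smaller: the endpoints $v_0$ and $v_m$ of the original $H$ are available as a potential $(b_1, b_2)$ pair for any $t_1, t_2$ occurring inside a hill, with $b_2 = v_m \in \beta(v_0)$; since $H$ is limited, no split can occur between $v_0$ and any $t_1$ inside a hill for which $\tau(t_1) = \tau(v_0)$, $\sigma(t_1) = \sigma(v_0)$, and some later $t_2 \in \beta(t_1)$ with $\sigma(t_2) = \sigma(v_m)$; this rules out the triple $(\sigma(v_0), \sigma(v_m), \tau(v_0))$ from being ``used with a split'' inside that hill, so we may delete it from the $W$ the hill respects. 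More carefully, I'd fix the triple $\omega^* = (\sigma(v_0), \sigma(v_m), \tau(v_0)) \in W$ and show each active hill respects $W \setminus \{\omega^*\}$; the inactive hills contribute no splits at all. By induction each active hill has at most $(|N|\cdot3^{|N|})^{|W|}$ splits, giving a total of at most $|N|\cdot3^{|N|} + (|N|\cdot3^{|N|})\cdot(|N|\cdot3^{|N|})^{|W|} = (|N|\cdot3^{|N|})(1 + (|N|\cdot3^{|N|})^{|W|}) \leq (|N|\cdot3^{|N|})^{|W|+1}$, as desired.

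The main obstacle I anticipate is the bookkeeping needed to verify that each hill genuinely respects a set one element smaller — in particular, confirming that the triple $\omega^*$ removed is the \emph{same} triple for every hill, and that ``limited'' for the whole descent $H$ really does transfer to ``limited'' for each hill (one must check that a forbidden configuration $b_1 < t_1 < t_2 \leq b_2$ internal to a hill is also a forbidden configuration in $H$, which it is since $\beta$ and $\tau$ and $\sigma$ are defined purely from $D$, not from $H$). A subtle point: in the hill, the node playing the role of $v_0$ has a taller stack than $v_0$ did, so its $\beta$-set is different; I need the fact (noted in the preliminaries, ``for all $v' \in \beta(v)$, $\eta(v') = \eta(v)$'') plus the observation that a node inside a hill with $\beta$-descendant inside the same hill keeps everything at the hill's base height, so the internal structure is self-contained. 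Also the ``flat base respects $W$'' claim requires checking that if $v_{i_a}, v_{i_b}$ are base nodes with $v_{i_b} \in \beta(v_{i_a})$ then $(\sigma, \sigma, \tau)$ for this pair lies in $W$ — which is immediate since these are nodes of $H$ itself and $H$ respects $W$. Once these structural facts are pinned down, the arithmetic is the routine geometric-series bound shown above.
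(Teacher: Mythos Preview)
Your overall strategy---induction on $|W|$, isolating the base as a flat descent bounded by Lemma \ref{split2}, and recursing into the hills with a smaller $W$---matches the paper's.  But the step where you claim each active hill respects $W\setminus\{\omega^*\}$ has a genuine gap.  What ``$H$ limited'' gives you, when the triple $\omega^* = (\sigma(v_0),\sigma(v_m),\tau(v_0))$ appears at some $(i'',j'')$ inside a hill, is only that there are no splits in $[0,i'')$ and no splits in $[j'',m]$.  It does \emph{not} forbid the triple from appearing; it forbids splits \emph{outside} that occurrence.  So to conclude the hill respects $W\setminus\{\omega^*\}$ you need an \emph{external} split (between $0$ and $i'$, or between $j'$ and $m$) to reach a contradiction with limitedness.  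When there are at least two active hills, each one supplies an external split for the other, and your argument goes through.  When there is exactly one active hill, however, all splits lie inside it, no external split exists, and nothing prevents $\omega^*$ from occurring---so the induction hypothesis on $W\setminus\{\omega^*\}$ cannot be invoked.

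The paper closes this gap by an extra minimality device: it fixes $n$ and takes the \emph{shortest} $m$ realizing a limited controlled descent respecting $W$ with exactly $n$ splits, and it only analyzes the single hill with the maximum number of splits.  If that hill had no external split, its interior $(v_{i'},\dots,v_{j'})$ would itself be a limited controlled descent respecting $W$ with $n$ splits but strictly smaller $m$, contradicting minimality.  Hence an external split exists, and now limitedness really does exclude $\omega^*$ from that one hill.  Your write-up lacks any such minimality (or a secondary induction on $m$), so the single-active-hill case is not handled.

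A secondary issue: your final arithmetic is off.  You assert
\[
|N|\cdot 3^{|N|}\;+\;(|N|\cdot 3^{|N|})\cdot(|N|\cdot 3^{|N|})^{|W|}\;\leq\;(|N|\cdot 3^{|N|})^{|W|+1},
\]
but with $k=|N|\cdot 3^{|N|}$ the left side equals $k+k^{|W|+1}>k^{|W|+1}$.  Even dropping the extra $k$, each hill contributes up to two additional ``edge'' splits (between $v_{i_\ell}$ and $v_{i_\ell+1}$, and between $v_{i_{\ell+1}-1}$ and $v_{i_{\ell+1}}$) that are splits of $H$ but not of the hill-as-descent, so the naive bound $k\cdot k^{|W|}$ is still short by $2k$.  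The paper absorbs this by proving the sharper inductive bound $f(|W|)=k^{|W|+1}-2k$ and checking $k\bigl(2+f(|W|-1)\bigr)\leq f(|W|)$, which holds because $k\geq 3$.
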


\begin{proof}

Let $k = |N|\cdot3^{|N|}$, the bound from Lemma \ref{split2}.  For $x \in \mathbb{N}$, let $f(x) = k^{x+1}-2k$.  We will show by induction on $|W|$ that any limited controlled descent which respects $W$ has at most $f(|W|)$ splits.

If $|W| = 0$, then no limited controlled descent respects $W$, so the statement holds trivially.  So say $|W| \geq 1$.  Suppose for induction that for every $W'$ such that $|W'| < |W|$, any limited controlled descent which respects $W'$ has at most $f(|W'|)$ splits.

Take any $n$ such that there is a limited controlled descent which respects $W$ and has exactly $n$ splits.  We will show that $n \leq f(|W|)$.  Take the lowest $m$ such that there is a limited controlled descent $H = (v_0, \dotsc, v_m)$ which respects $W$ and has exactly $n$ splits.

Suppose $m = 0$.  Then $n = 0$.  Since $|N| \geq 1$, $k \geq 3$.  Then since $|W| \geq 1$ and $f$ is increasing, $f(|W|) \geq f(1) \geq 3^{1+1}-2\cdot3 \geq 3$.  Then $n \leq f(|W|)$ as desired.  So say $m \geq 1$.

Let $base$ be the list consisting of every node $v_i$ in $H$ for which $\eta(i) = \eta(0)$.  Call $i,j$ base-adjacent iff $i < j$, $i$ and $j$ are in $base$, and no node between $i$ and $j$ is in $base$.  Call the interval from $i$ to $j$ a hill iff $i,j$ are base-adjacent.  Call a hill a split hill iff it contains at least one split.

Take any base-adjacent $i,j$ such that the hill between $i$ and $j$ has at least as many splits as any hill between 0 and $m$.  Let $x$ be the number of splits between $i$ and $j$.  We will show $x \leq 2 + f(|W|-1)$.  Suppose $j = i+1$.  Then $x \leq 1$.  Suppose $j = i+2$.  Then $x \leq 2$.  So say $j > i+2$.  Let $i' = i+1$ and $j' = j-1$.  Then $0 < i' < j' < m$, and since $H$ is controlled, $\eta(i') = \eta(j') = 1 + \eta(0)$ and $j'$ is in $\beta(i')$.

Suppose there are no splits between 0 and $i'$ or between $j'$ and $m$.  Then there are $n$ splits between $i'$ and $j'$.  Let $m' = j'-i'$.  Then the limited controlled descent $(v_{i'}, \dotsc, v_{j'})$ respects $W$ and has $n$ splits.  But $m' < m$, a contradiction, since by the construction of $m$, there is no such $m'$.

So there is such a split.  Now, since $m$ is in $\beta(0)$, $(\sigma(0),\sigma(m),\tau(0))$ is in $W$.  Let $W'$ = $W - (\sigma(0),\sigma(m),\tau(0))$.  Suppose there are $i'',j''$ such that $i' \leq i'' < j'' \leq j'$, $j''$ is in $\beta(i'')$, $\sigma(i'') = \sigma(0)$, $\sigma(j'') = \sigma(m)$, and $\tau(i'') = \tau(0)$.  Let $b_1=0$, $t_1=i''$, $t_2=j''$, and $b_2=m$.  Since there is a split between 0 and $i'$ or between $j'$ and $m$, there is a split between 0 and $i''$ or between $j''$ and $m$, hence between $b_1$ and $t_1$ or between $t_2$ and $b_2$.  But then $H$ is not limited, a contradiction.  So there are no such $i'',j''$.  Hence the limited controlled descent $(v_{i'}, \dotsc, v_{j'})$ respects $W'$.  Since $|W'| < |W|$, by the induction hypothesis there are at most $f(|W'|)$ splits between $i'$ and $j'$.  Then allowing a split between $i$ and $i'$ and a split between $j'$ and $j$, $x \leq 2 + f(|W'|)$.

Now, the list $base$ is a flat descent.  If $base$ was not limited, then $H$ would not be limited.  So $base$ is limited.  Then by Lemma \ref{split2}, it has at most $k$ splits.  Then there are at most $k$ split hills in $H$.  Recall that $k \geq 3$.  Each split hill has at most $x$ splits.  Therefore
\begin{align*}
n &\leq kx \\
n &\leq k(2 + f(|W|-1)) \\
n &\leq 2k + k(k^{|W|-1+1}-2k) \\
n &\leq 2k + k^{|W|+1} - 2k^2 \\
n &\leq f(|W|),
\end{align*}
completing the induction.  Therefore any limited controlled descent which respects $W$ has at most $(|N|\cdot3^{|N|})^{|W|+1}$ splits.\qed\end{proof}

\limited*

\begin{proof}Since $H$ is a path from the root to a leaf, $H$ is a controlled descent.  Suppose $H$ is limited.  Let $W = N \times N \times R$.  Clearly $H$ respects $W$.  Then by Lemma \ref{split3}, $H$ has at most $(|N|\cdot3^{|N|})^{|W|+1}$ = $(|N|\cdot3^{|N|})^{|N|^2\cdot3^{|N|}+1}$ splits.  Then since by definition, every branch node is immediately followed by a split, $H$ has at most $(|N|\cdot3^{|N|})^{|N|^2\cdot3^{|N|}+1}$ branch nodes, a contradiction.  So $H$ is not limited and the lemma holds.\qed\end{proof}

\subsection{Theorem \ref{pumpable}}

Finally, we give a full proof of Theorem \ref{pumpable}, filling out the sketch given in the body.

\pumpable*

\begin{proof}Let $G = (N,T,F,P,S)$ be a grounded indexed grammar such that $L(G) = L$.  Let $d$ be the highest $i$ such that there is a production in $P$ with $i$ nonterminals on the righthand side.  Let $e$ be the highest $i$ such that there is a production in $P$ with $i$ terminals on the righthand side.  Let $z = (|N|\cdot3^{|N|})^{|N|^2\cdot3^{|N|}+1}$.  Let $l = ed^z+1$.

If $L$ is finite, then trivially the theorem holds.  So say $L$ is infinite.  Let $w$ be a word in $L$ with at least $l$ marked positions.  Take any derivation tree $D$ of $w$.  Suppose no path in $D$ from the root to a leaf has more than $z$ branch nodes.  The maximum outdegree of $D$ is at most $d$.  Then by Lemma 14 of \cite{rabkin}, $D$ has at most $d^z$ marked leaves.  Recall that each leaf has a label in $T^*$ (a terminal string) and marked leaves are those whose label contains a marked position of $w$.  Each marked leaf has a label of length at most $e$.  But then $w$ has at most $ed^z$ marked positions, a contradiction.

So some path $H = (v_0, \dotsc, v_m)$ in $D$ from the root to a leaf (excluding the leaf) has more than $z$ branch nodes.  Then by Lemma \ref{limited}, there are $0 \leq b_1 < t_1 < t_2 \leq b_2 \leq m$ such that $\sigma(b_1) = \sigma(t_1)$, $\sigma(t_2) = \sigma(b_2)$, $b_2$ is in $\beta(b_1)$, $t_2$ is in $\beta(t_1)$, $\tau(b_1) = \tau(t_1)$, and there is a branch node between $b_1$ and $t_1$ or between $t_2$ and $b_2$.

We specify the subscripts of $I$ by defining $n$ and each $n_i$.  Let $n = 4 + |\beta(t_1)|$.  Let $n_1 = n_n = 1$.  We will give names to the nodes in $\beta(b_1)$ and $\beta(t_1)$, as follows.  Let $n_2$ be the number of nodes in $\beta(b_1)$ to the left of $t_1$, plus one.  From left to right, call these nodes $N_{2,2}, N_{2,3}, \dotsc, N_{2,n_2}$.  Call the nodes in $\beta(t_1)$ from left to right $N_3, N_4, \dotsc, N_{n-2}$.  For each $i$ from 3 to $n-2$, let $n_i$ be the number of nodes in $\beta(b_1)$ which are reachable from $N_i$.  From left to right, call these nodes $N_{i,1}, N_{i,2}, \dotsc, N_{i,n_i}$.  Let $n_{n-1}$ be the number of nodes in $\beta(b_1)$ to the right of $t_1$, plus one.  From left to right, call these nodes $N_{n-1,1}, N_{n-1,2}, \dotsc, N_{n-1,n_{n-1}-1}$.

We now define each $u_i$ and each $v_{i,j}$, as well as the map $\phi$ which will be used in the replacement operation.  Let $v_{1,1}$ be the yield of $D$ to the left of $b_1$, and let $\phi(1,1) = 1$.  Let $v_{2,1} = \lambda$ and let $\phi(2,1) = 2$.  Let $v_{n-1,n_{n-1}} = \lambda$ and let $\phi(n-1,n_{n-1}) = n-1$.  Let $v_{n,1}$ be the yield of $D$ to the right of $b_1$, and let $\phi(n,1) = n$.  For all other $i,j$ for which there is a node $N_{i,j}$, proceed as follows.  Set $v_{i,j}$ = yield$(N_{i,j})$.  If $N_{i,j}$ is $b_2$, then set $\phi(i,j) = i$.  (Notice that $N_i = t_2$, so $\sigma(N_i) = \sigma(N_{i,j})$.)  Otherwise, if yield($N_{i,j}$) contains a marked position, then $\sigma(N_{i,j})$ is in $\tau(b_1)[3]$.  Since $\tau(b_1) = \tau(t_1)$, there is an $N_k$ such that $\sigma(N_k) = \sigma(N_{i,j})$ and yield($N_k$) contains a marked position.  Set $\phi(i,j) = k$.  Otherwise, yield($N_{i,j}$) does not contain a marked position, so $\sigma(N_{i,j})$ is in $\tau(b_1)[2]$.  Then there is an $N_k$ such that $\sigma(N_k) = \sigma(N_{i,j})$.  Set $\phi(i,j) = k$.  Finally, for $i$ from 1 to $n$, let $u_i = v_{i,1} \dotsm v_{i,n_i}$.

We now have that each $u_i$ is the yield of the nodes in some part of $D$.  In particular, $u_1$ is the yield to the left of $b_1$, $u_2$ is the yield under $b_1$ to the left of $t_1$, $u_3 \dotsm u_{n-2}$ is the yield under $t_1$, $u_{n-1}$ is the yield under $b_1$ to the right of $t_1$, and $u_n$ is the yield to the right of $b_1$.  Thus $u_1 \dotsm u_n$ is the yield of $D$, namely $w$.  This gives us part 1 of the theorem.

To establish part 2, we will argue that the derivation tree $D$ of $s$ can be ``pumped'' to produce new derivation trees which yield strings in accordance with the replacement operation.  Let $x \in F^*$ be the stack at node $b_1$.  Since $b_2$ is in $\beta(b_1)$, and $b_2$ is a descendant of $t_1$, the stack at node $t_1$ has the form $yx$ for some $y \in F^*$.  Thus in $D$, the stack grows from $x$ to $yx$ between $b_1$ and $t_1$, remains at or above $yx$ until $t_2$, and then shrinks back to $x$ at $b_2$.  We will construct a new derivation tree $D'$ in which the stack grows from $x$ to $yx$ to $yyx$, then shrinks from $yyx$ to $yx$ to $x$.  The construction is as follows.  Initialize $D'$ to a copy of $D$.  Next, make a copy $C$ of the subtree $D(b_1)$.  In $C$, for every ancestor of any node in $\beta(b_1)$, put $y$ on top of its stack.  For each node $N_{i,j}$ in $\beta(b_1)$, proceed as follows.  Notice that $N_{i,j}$ has stack $x$ in $D$ and hence stack $yx$ in $C$, while $N_{\phi(i,j)}$ has stack $yx$ in $D$.  Further, $\sigma(N_{i,j}) = \sigma$($N_{\phi(i,j)}$).  So in $C$, replace the subtree $C(N_{i,j})$ with the subtree $D$($N_{\phi(i,j)}$).  Finally, in $D'$, replace the subtree $D'(t_1)$ with $C$.  The resulting derivation tree $D'$ now obeys the rules of $G$ and has a yield equal to the result of performing the replacement operation of part 2 on $w$.  This procedure can be repeated to produce a new tree $D''$ in which the stack grows to $yyyx$, with a yield corresponding to two iterations of the replacement operation of part 2, and so on.  This gives us part 2 of the theorem.  Part 3 follows from the construction of the $\phi$ operation.

We now establish part 4 of the theorem.  Recall that there is a branch node in $H$ between $b_1$ and $t_1$ or between $t_2$ and $b_2$.  If there is a branch node between $b_1$ and $t_1$, then by definition, this branch node has at least two marked children.  Take any one of these marked children which is not on the path from node $b_1$ to node $t_1$.  This child is either to the left of $t_1$ or to the right of $t_1$.  If it is to the left of $t_1$, then some marked $N_{2,i}$ is reachable from it for some $2 \leq i \leq n_2$.  Since $N_{2,i}$ is marked, yield($N_{2,i}$) contains a marked position.  So $v_{2,i}$ contains a marked position.  Then since $\phi(2,1) = 2$ and $u_2$ contains $v_{2,1}$ and $v_{2,i}$, (2,1) satisfies part 4.  Similarly, if the marked child is to the right of $t_1$, then some marked $N_{n-1,i}$ is reachable from it for some $1 \leq i \leq n_{n-1}-1$.  Since $N_{n-1,i}$ is marked, yield($N_{n-1,i}$) contains a marked position.  So $v_{n-1,i}$ contains a marked position.  Then since $\phi(n-1,n_{n-1}) = n-1$ and $u_{n-1}$ contains $v_{n-1,i}$ and $v_{n-1,n-1}$, $(n-1,n-1)$ satisfies part 4.  Otherwise, if there is no branch node in $H$ between $b_1$ and $t_1$, then there is a branch node between $t_2$ and $b_2$.  We have $t_2 = N_i$ and $b_2 = N_{i,j}$ for some $3 \leq i \leq n-2$ and $1 \leq j \leq n_i$.  By definition, the branch node between $t_2$ and $b_2$ has at least two marked children.  Take any one of these marked children which is not on the path from node $t_2$ to node $b_2$.  Some marked $N_{i,k}$ is reachable from this child where $k \neq j$.  So yield($N_{i,k}$) contains a marked position, hence $v_{i,k}$ contains a marked position.  Since $N_{i,j}$ is $b_2$, $\phi(i,j) = i$ by construction.  Then since $u_i$ contains both $v_{i,j}$ and $v_{i,k}$, $(i,j)$ satisifies part 4.  This establishes part 4 of the theorem, which completes the proof.\qed

\end{proof}

\end{document}